\newcommand{\ie}{\emph{i.e., }}
\begin{document}

\title{SamWalker++: recommendation with informative sampling strategy}

\author{Can Wang,
        Jiawei Chen, Sheng Zhou, Qihao Shi, Yan Feng and Chun Chen\IEEEcompsocitemizethanks{\IEEEcompsocthanksitem Can Wang,
        Jiawei Chen, Sheng Zhou, Qihao Shi, Yan Feng and Chun Chen are with the School
of Computer Science and Technology, Zhejiang University, HangZhou,
China \\
Jiawei Chen is corresponding author\\
E-mail: \{wcan,sleepyhunt,zhousheng\_zju,shiqihao321,fengyan,chenc\}@zju.\\edu.cn
}}


\markboth{Transactions on Knowledge and Data Engineering}%
{Shell \MakeLowercase{\textit{et al.}}: Bare Demo of IEEEtran.cls for Computer Society Journals}

\IEEEtitleabstractindextext{%
\begin{abstract}
Recommendation from  \textit{implicit feedback} is a highly challenging task due to the lack of reliable negative feedback data. Existing methods address this challenge by treating all the un-observed data as negative (dislike) but downweight the confidence of these data. However, this treatment causes two problems: (1) Confidence weights of the unobserved data are usually assigned manually, which lack flexibility and may create empirical bias on evaluating user's preference. (2) To handle massive volume of the unobserved feedback data, most of the existing methods rely on stochastic inference and data sampling strategies. However, since a user is only aware of a very small fraction of items in a large dataset, it is difficult for existing samplers to select \textit{informative} training instances in which the user really dislikes the item rather than does not know it.

To address the above two problems, we propose two novel recommendation methods SamWalker and SamWalker++ that support both adaptive confidence assignment and efficient model learning. SamWalker models data confidence with a social network-aware function, which can adaptively specify different weights to different data according to users'  \textit{social contexts}. However, the social network information may not be available in many recommender systems, which hinders application of SamWalker. Thus, we further propose SamWalker++, which does not require any side information and models data confidence with a constructed pseudo-social network. In the pseudo-social network, similar users are connected with specific item nodes or community nodes. This way, the inference of one's data confidence can benefit from the knowledge from other similar users. We also develop fast random-walk-based sampling strategies for our SamWalker and SamWalker++ to adaptively draw informative training instances, which can speed up gradient estimation and reduce sampling variance. Extensive experiments on five real-world datasets demonstrate the superiority of the proposed SamWalker and SamWalker++.
\end{abstract}

\begin{IEEEkeywords}
Recommendation, Implicit feedback, Sampling, Exposure
\end{IEEEkeywords}}

\maketitle

\IEEEdisplaynontitleabstractindextext

\IEEEpeerreviewmaketitle

\IEEEraisesectionheading{\section{Introduction}\label{sec:introduction}}
With the exponential growth of information on electronic commerce websites, Collaborative Filtering (CF) as a prevalent approach in recommender systems are drawing more and more attention from both academia and industry \cite{ricci2015recommender,jannach2010recommender}. There are two types of feedback data in Collaborative Filtering systems. The first is called \emph{explicit feedback}, where the numerical ratings directly reflecting users' preference are provided. The other is \emph{implicit feedback}, which is a natural byproduct of users' behavior such as consumption, viewing or clicking. Since implicit feedback are more easily available, recent research attention is increasingly shifted from explicit feedback to implicit feedback. However, learning a recommender system from implicit feedback is more challenging due to the lack of reliable negative data. Only the positive feedback are observed, while the negative feedback are mixed with missing values in unobserved data. In other words, items interacted by the user reflect that the user favors the items, while non-interacted items does not necessarily mean the user dislikes the items. In most cases, users may just not know the items that they have not interacted.

A conventional strategy to address the problem is treating all the un-observed data as negative (dislike) but downweight the confidence of these data. However, this treatment poses two key research problems for implicit recommendation:

\textbf{(P1) How to assign  appropriate confidence weights for data?} Data confidence weights, which controls the contribution of the data on learning a recommendation model, usually significantly affect the model's accuracy. However, assigning appropriate confidence weights is challenging, as the real data confidence may change for various user-item combinations. Some unobserved data can be attributed to user's preference while others are the results of users' limited scopes. Most of existing methods rely on manual assignment of confidence weights to the data. Choosing confidence weights usually require human rich experience or large computational resource for grid search. Furthermore, it is unrealistic for researchers to manually set flexible and diverse weights for millions of data. Coarse-grained manual confidence weights will create empirical bias on estimating user's preference.

\textbf{(P2) How to efficiently learn a recommendation model from the large-scale implicit feedback data?} The large-scale unobserved data incur inefficiency problem. It is computationally impractical to traverse over the whole data set to obtain the gradients. To address this problem,  two types of strategies have been adopted in previous works. The first is batch-based gradient descent with memorization, such as ALS\cite{hu2008collaborative}, eAlS\cite{he2016fast}, FAWMF\cite{chen2020fast}. However, this kind of methods are only suitable for the specific models with K-separable property and L2 loss function, which will sacrifice models capacity and lead to sub-optimal performance. In fact, the models with more flexible deep structure, confidence weights and loss function have been validated achieving better performance \cite{he2017neural,chen2018modeling}. The other type is employing stochastic gradient descent solvers with data sampling strategies. However, in real-world applications, users typically are only aware of a relatively small fraction of the potential items \cite{lichman2018prediction}. In such cases, existing samplers usually select uninformative data with low confidence weights, in which the user just does not know the item rather than dislikes it. This will affect convergence and recommendation performance of the model.

\begin{figure}[t!]
\centering
\includegraphics[width=0.44\textwidth]{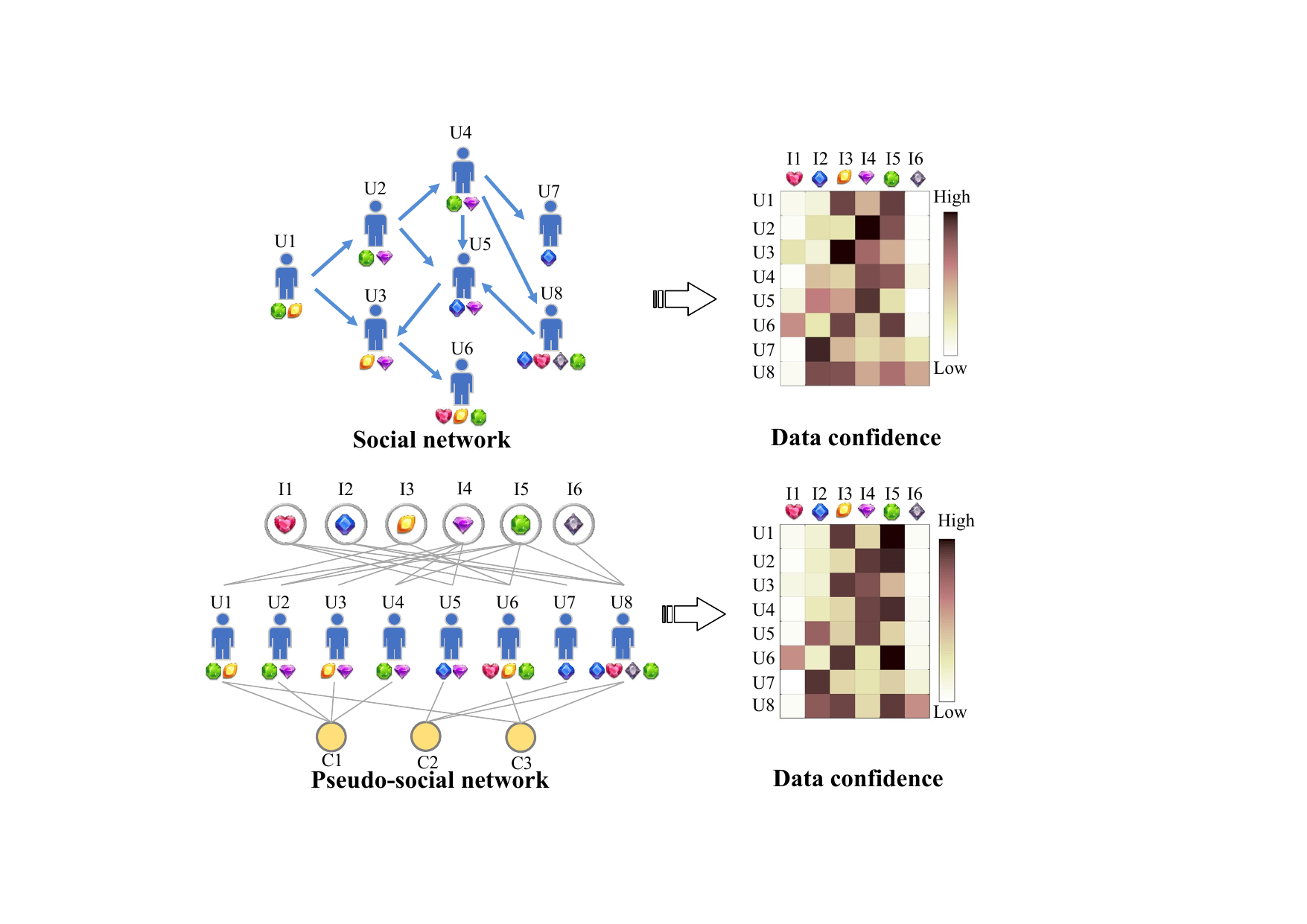}
 \caption{ SamWalker estimates data confidence based on user's social relations. The left part of the figure illustrates a social network including implicit feedback expressed by users. The consumed items (i.e. the items with positive feedback) are shown below the users. The right part shows our inferred data confidence.}
\label{samshiyi}
\end{figure}

\begin{figure}[t!]
\centering
\includegraphics[width=0.44\textwidth]{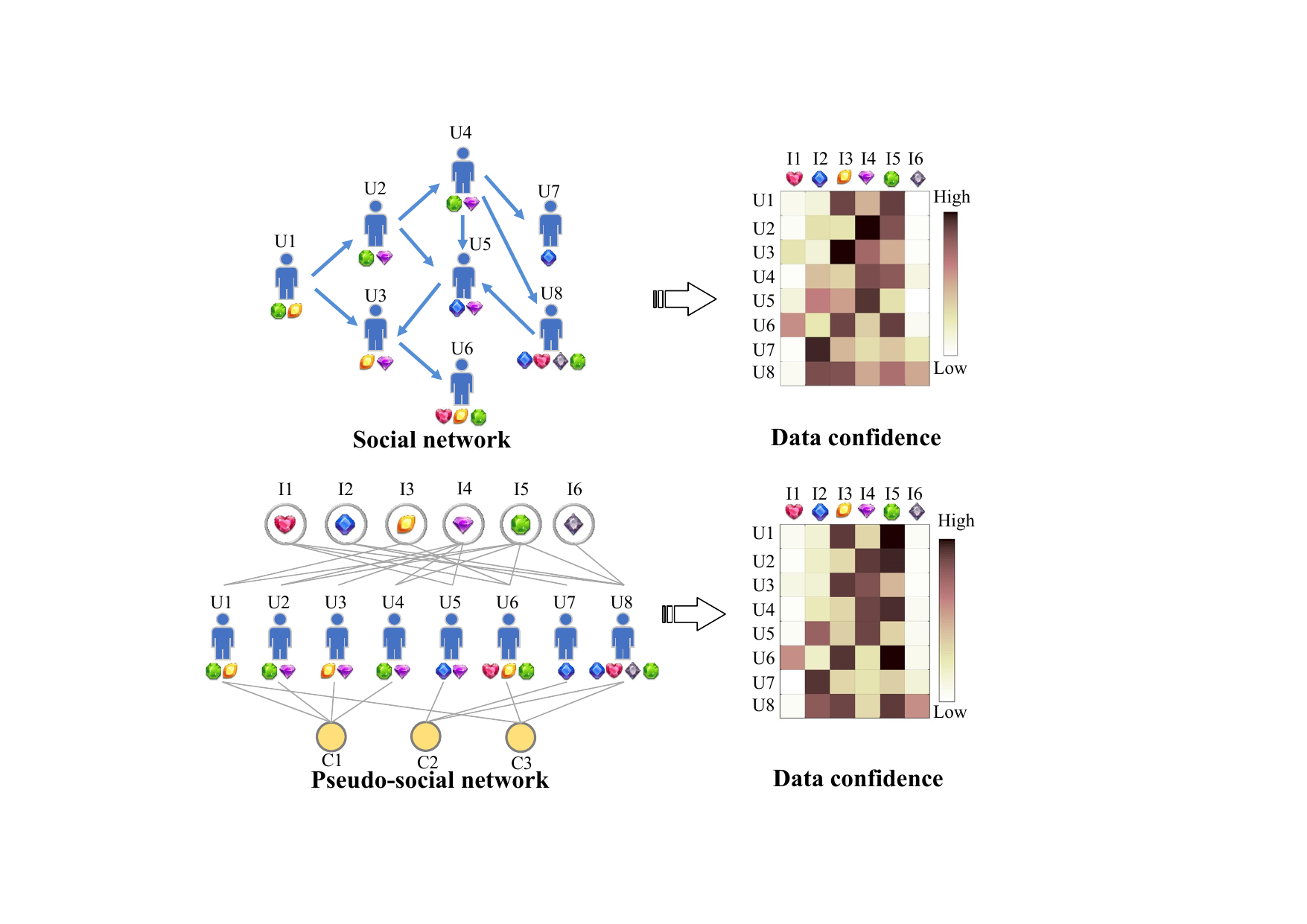}
 \caption{ SamWalker++ estimates data confidence based on the constructed pseudo-social network. The left part of the figure illustrates a pseudo-social network, where users are connected with additional item nodes ($I$1-$I$6) or community nodes($C$1-$C$3). We give links for the user-item pairs with positive feedback, and the user-community pairs if the user belongs to the community.}
\label{jiashiyi}
\end{figure}

To deal with these problems, we propose two novel recommendation methods SamWalker and SamWalker++ to simultaneously learn the personalized data confidence and draw informative training instances. We first present SamWalker which leverages social network information to address the problems. With the development of online social websites \cite{nie2016learning}, social relations have become a major information resource when users select items to consume \cite{Pan2017}. As we can see from Figure \ref{fg:dataana}, the social connected users exhibit more similarity in their consumptions on three typical recommendation datasets. Users usually get item information from social friends \cite{chen2013information} and their \textit{exposure} to items (i.e. whether a user knows the items) will inevitable be dominated by their \textit{social contexts} (i.e. whether their direct or indirect social neighbors have consumed the items).  Thus, users' social relations and social contexts reflect how users are exposed to the items and suggest the confidence of the data. It is consistent with our intuitions. Note that there exists two reasons for negative feedback: unknown or dislike. The more popular an item is among the user's social neighbors (e.g. the purple gem $I$4 comparing with the green gem $I$5 for user $U$1 in Figure \ref{samshiyi}), the more likely it will be that the user knows the item and his feedback is attributed to his preference. Correspondingly, the data will be more reliable in deriving user's preference. To capture this insight, as illustrated in Figure \ref{samshiyi}, SamWalker simulates item information propagation along the social network and models individual confidence weights as a social context-aware function. By iteratively learning transformation function and user's preference based on EXMF framework (exposure-based matrix factorization \cite{liang2016modeling}), SamWalker can adaptively specify different weights to different data based on user's social contexts.

A key limitation of SamWalker is that it requires the presence of social network information, which may not be available in many applications. To deal with this problem, we further propose SamWalker++, which only uses implicit feedback data and does not require any side information. The rationale of SamWalker++ is the ``wisdom of the crowds'', i.e. a user's behavior reflects not only his exposure but also the knowledge of other similar users. SamWalker++ constructs a pseudo-social network to replace social network, where users are connected with specific additional nodes, so that the knowledge of one's exposure can be transferred to other similar users. As shown in Figure \ref{samshiyi}, here we explicitly introduce two kinds of nodes to capture two kinds of similarities. On the one hand, note that a positive feedback signifies that the user knows the item. The users who have interacted with common items may have similar exposure. Thus, SamWalker++ leverages items as bridges so that the inference of the user's exposure can benefit from the rich information from the similar users with co-purchased items. On the other hand, recent social literatures \cite{palla2005uncovering,zhou2011understanding} suggest that users are clustered into some content-sharing communities. Item information will be spread in the community and the community members tend to share similar exposure. Motivated by this point, SamWalker++ deduces the latent communities for users and leverages communities as medium to exploit the knowledge of other community members. With the pseudo-social network, SamWalker++ devises a novel exposure model on the network and specifies data confidence weights with a network-aware function, which naturally encodes rich correlation information between users into the data confidence and potentially boosts recommendation performance.

\begin{figure}[t!]
\centering
\includegraphics[width=0.48\textwidth]{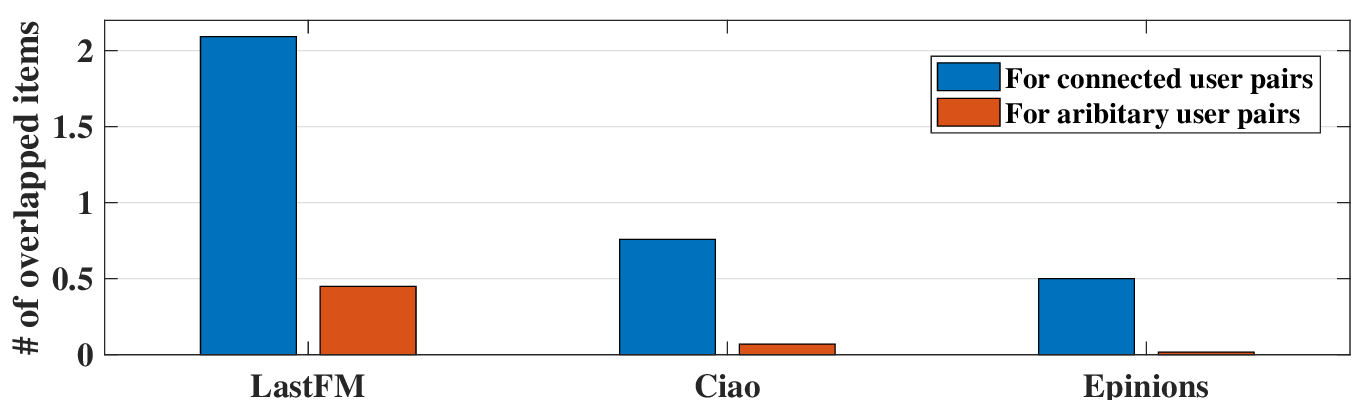}
 \caption{The average number of the overlapped items (the items that have been consumed by both users) for social connected user pairs and for arbitrary user pairs on three typical recommendation datasets.}
\label{fg:dataana}
\end{figure}

Let us use an example to illustrate why constructing pseudo-social network can help us infer the data confidence. Figure \ref{jiashiyi} illustrates an example of pseudo-social network. The inference of the target user's exposure can refer to the behavior of graph neighbors on the network. For example, we can deduce that user $U$3 may know the item $I$5 (green gem), because his similar users $U$1, $U$3, $U$4 who also belong to community $C$1 have interacted with $I$5. Correspondingly, his feedback data ($U$3-$I$5) can be attributed more to the preference and therefore has relatively large confidence weight. Another example can be seen from the higher confidence weights of the user-item pair ($U$1-$I$4) over the pair ($U$1-$I$2), as the similar users $U$2, $U$4 who share co-purchased items with $U$1 have consumed the item $I$4. Thus, the pseudo-social network is an effective and efficient tool to transfer the knowledge from the connected or even higher-order connected similar users to the target user, which improves the accuracy of the learned data confidence.

Due to the large number of unobserved data, developing an efficient informative sampling strategy is crucial. It is computationally infeasible to estimate and rank the current learned confidence weights for every data to select informative data. Instead, we propose efficient sampling strategies based on the random walk along the network for our SamWalker and SamWalker++. Intuitively, the more and closer neighbors have consumed the item, the more likely the user will know the item, in which case the feedback can be more confidently attributed to the user's preference. Consequently, we conduct personalized random walk for each user to explore his local (pseudo-) social network contexts and pick out items consumed by those similar users. Theoretical analysis proves that the distribution of the proposed sampling strategy is proportional to the data confidence while the sampling complexity is linear to the number of sampled instances, which heavily reduces the sampling variance and speed up gradient estimation.

Although this paper is extension of our previous work \cite{chen2019samwalker}, in which we present a social recommendation model SamWalker that adaptively learns the data confidence based on users' social context. In this article, we further deliver the following contributions:
\begin{itemize}
\item As the social network information may not be available in many recommender systems, we propose to construct pseudo-social network to replace social network, where similar users are connected with specific item nodes and community nodes.
\item We propose a novel recommendation model SamWalker++ on the pseudo-social network, which does not require any side information and can adaptively deduce the data confidence with the knowledge from other similar data.
\item We develop an efficient random walk-based sampling strategy along the pseudo-social network to draw informative training instances for SamWalker++, which can both reduce sampling variance and speed up gradient estimation.
\item Extensive experiments on five well-known benchmark datasets demonstrate that SamWalker++ outperforms a range of state-of-the-art methods and show the superiority of the proposed sampling strategy.
\end{itemize}

The rest of this paper is organized as follows. We briefly review related works in section 2. We give the problem definition and background in section 3. The SamWalker model is introduced in section 4. We further present our novel non-social method SamWalker++ in section 5. The informative sampling strategy and learning algorithm are presented in section 6. The experimental results are presented in section 7. Finally, we conclude the paper and present some directions for future work in section 8.

\section{Related work}
In this section, we review the most related works from the following four perspectives.

\textbf{Data confidence in implicit recommendation.} As the unobserved data are unreliable, learning a recommendation model from implicit feedback requires assigning confidence weights for the data. Most of the existing methods assign confidence weights manually. For example, the classic weighted matrix factorization (WMF) \cite{hu2008collaborative} and many neural-based collaborative methods (e.g. CDAE\cite{wu2016collaborative}, NCF\cite{he2017neural}, LightGCN\cite{he2020lightgcn}) used a simple heuristic where all negative feedback data are equally downweighted vis-a-vis the positive feedback data; \cite{he2016fast} and \cite{yu2017selection} assign the confidence weights based on item popularity. The readers can refer to the survey \cite{chen2020bias} for more information.

More recently, a new probabilistic model EXMF\cite{liang2016modeling} was proposed to incorporate user's exposure to items into the CF methods. When inferring user's preference, EXMF can translate user's exposure as data confidence. However, as analysed in section \ref{se:exmf}, this method suffers from over-fitting and low efficiency. \cite{chen2020fast} further proposes to model data confidence with a community-based neural network in their FAWMF. However, FAWMF is designed for fast memorization-based learning so that the capacity of the network is constrained. It can be seen from the low rank of the confidence matrix. The confidence weights are modeled from a global perspective and usually lack personality. Our experiments show that FAWMF performs poorly on the sparse dataset.

\textbf{Efficient Recommendation.} For efficient recommendation, two strategies have been proposed in previous works. The first is employing stochastic gradient descent and sampling strategy to accelerate learning. The most popular sampling strategy is to draw un-observed feedback data uniformly, which is applied in many recommendation models, including classic matrix factorization \cite{hu2008collaborative}, pair-wised models (e.g., Bpr  \cite{rendle2009bpr}) and sophisticated neural network-based methods (e.g., NCF \cite{he2017neural}, lightGCN\cite{he2020lightgcn}). However, uniform sampler will cause high variance and poor convergency.

 Some other sampling strategies are proposed to improve convergence from different perspectives: \cite{yu2017selection} and \cite{hernandez2014stochastic} propose item popularity-based and item-user co-bias sampling strategy to reduce sampling variance; \cite{chen2017sampling} presents several sampling strategies to balance backward computation of the item-dependent neural network and the user-item interaction function; \cite{yu2018walkranker} attempts draw positive data based on random walk along user-item bipartite graph. Our sampling strategies differs from \cite{yu2018walkranker} in that we pay more attention to sample informative negative data; Some work \cite{rendle2014improving, zhang2013optimizing, DBLP:conf/www/WangX000C20} also present subtle dynamic sampling strategies to over-sample the ``difficult'' negative examples in which the prediction is much different from the ground-truth.  Although effective, sampling ``difficult'' data for advanced preference model still suffers from low efficiency.  Also, the ``difficult" instances do not suggest that the user really dislike the item. The stochastic gradient estimator is biased and the natural noise in user-item feedback data may be amplified \cite{li2018adaerror}; To capture real negative data, \cite{ding2018improved,ding2019reinforced} propose to leverage exposure data, which however is not available in many situations.

 Another strategy for efficient recommendation is memorization strategy. When learning a recommendation model from implicit feedback, Some works \cite{hu2008collaborative,he2016fast,bayer2017generic,chen2020fast} propose to memorize some important intermediate variables so that the massive repeated computation can be avoided. However, these strategies are just suitable for the model with K-separable property and L2 loss function. In fact, more flexible loss functions(e.g. cross-entropy loss), confidence weights and neural network structure (e.g. NCF, LightGCN) have been validated achieving better performance.

\textbf{Social recommendation.} Social information has been utilized to improve recommendation performance in recent works. These methods mainly assume that connected users will share similar preference \cite{golbeck2009trust}. Sorec \cite{ma2008sorec}, TrustMF \cite{yang2013social}, PSLF \cite{shen2012learning}, jointly factorize rating matrix and trust (social) matrix by sharing a common latent user space; In \cite{chaney2015probabilistic,xiao2017learning}, users' feedback is considered as synthetic results of their preference and social influence;  \cite{jamali2010matrix,wang2017learning} utilize a social regularization term to constrain user's latent preference close to his trusted friends; \cite{wang2016social,zhao2014leveraging} extend pair-wise BPR framework by further assuming that for all items with negative feedback, a user would prefer the items consumed by their friends over the rest; \cite{jiawei2018social} also leverage social information to handle non-random missing data.

Also, there are two recent works claim that comparing with users' preference, users' exposure is more influenced by their social friends (neighbors). Thus, \cite{DBLP:conf/aaai/WangZYZ18} and \cite{chen2018modeling} integrate social influence on user's exposure into the generative process of EXMF model. However, these two methods need to infer large number of  parameters of user's exposure, which will suffer from overfitting and inefficiency problems.

\textbf{Random walk in recommendation.} Random walk strategy has been widely applied in recommendation. \cite{jamali2009trustwalker} performs random walk along the social network to search relevant users who have similar preference with the target user for better rating prediction;  \cite{christoffel2015blockbusters}  exploits random walk to obtain diverse recommendation; \cite{vahedian2017weighted} further extends \cite{christoffel2015blockbusters}  in heterogenous information network to generate valuable meta-paths; \cite{yu2018walkranker} employs random walk to find more positive instances. We remark that these works adopt static (uniform or pre-defined) transfer probability in their random walk strategy. Besides, these random walks are not designed for sampling informative training instances.

\section{Preliminaries}
\label{se:exmf}
In this section, we first give the problem definition of recommendation with implicit feedback. Then, we introduce exposure-based matrix factorization (EXMF) \cite{liang2016modeling} framework from a variational perspective to provide usual insight into the relation between user's exposure and data confidence.
\subsection{Problem definition}
Suppose we have a recommender system with user set $U$ (including $n$ users) and item set $I$ (including $m$ items). The implicit feedback data is represented as $n \times m$ matrix $X$ with entries $x_{ui}$ denoting whether or not the user $u$ has interacted with (e.g. consume\footnote{Throughout the paper, we will use the term consume to denote any kind of implicit interaction, unless otherwise stated.}) the item $i$. Social information represented as $n \times n$ matrix $T$, with $T_{ij}$ indicating connection between user $i$ and $j$. Also, $\mathcal T_u$ denotes the set of the connected social friends (direct neighbors) of the user $u$. The task of a recommender system can be stated as follow: recommending items for each user that are most likely to be consumed by him. The notations of this work are summarized in Table \ref{notation}.

\begin{table*}[t!]
\footnotesize
    \centering
    \caption{Notations and Definitions.}
    \begin{tabular}{clccc}
    \hline
    Notation           &   Annotation     \\
    \hline
    $U $& User set  \\
    $I$& Item set \\
    $n$ & The number of users in the system \\
    $m$ & The number of items in the system \\
    $X$ & User-item feedback matrix \\
    $\mathcal T_u$ & The set of connected friends of the user $u$ \\
    $a_{ui}$ & The variable denoting whether the user $u$ knows the item $i$ \\
    $\gamma_{ui}$ &The variational parameter of the variational posterior $q(a_{ui}|x_{ui})$, \ie $q(a_{ui}|x_{ui})=Bernoulli(\gamma_{ui})$ \\
    $Y$ &The matrix consisting of $\gamma_{ui}$ \\
    $m_{uj}^{u \leftarrow i \leftarrow u}$ &The information flowing from other similar users to the target user $u$ along the co-purchased items \\
    $m_{uj}^{u \leftarrow c \leftarrow u}$ &The information flowing from other similar users to the target user $u$ along the community nodes \\
    $\varphi_{ui}$ & The edge weight for the social relation \\
    $\varphi_{ui}^{u \leftarrow i}, \varphi_{iu}^{i \leftarrow u}, \varphi_{uc}^{u \leftarrow c}, \varphi_{cu}^{c \leftarrow u}$ & the weights for the edges $u \leftarrow i$, $i \leftarrow u$, $u \leftarrow c$ or $c \leftarrow u$ in the pseudo-network respectively \\
    $\Phi_{ui}, \Phi_{ui}^{u \leftarrow i}, \Phi_{iu}^{i \leftarrow u}, \Phi_{uc}^{u \leftarrow c}, \Phi_{cu}^{c \leftarrow u}$ & the matrixes that consist of $\varphi_{ui}, \varphi_{ui}^{u \leftarrow i}, \varphi_{iu}^{i \leftarrow u}, \varphi_{uc}^{u \leftarrow c}$ or $\varphi_{cu}^{c \leftarrow u}$ respectively   \\
    \hline
    \end{tabular}%
    \label{notation}
\end{table*}

\subsection{Exposure-based matrix factorization (EXMF)}

 EXMF \cite{liang2016modeling} directly incorporates user's exposure into collaborative filtering. This is achieved by first generating the latent variable $a_{ui}$, which indicates whether user $u$ has been exposed to item $i$. Then, EXMF models user's consumption $x_{ui}$ based on $a_{ui}$ as follow:
\begin{align}
   {a_{ui}}&\sim Bernoulli({\eta _{ui}}) \hfill \\
  ({x_{ui}}|{a_{ui}} = 1)&\sim Bernoulli(\sigma (p_u^{\top}{q_i})) \hfill \\
  ({x_{ui}}|{a_{ui}} = 0)&\sim {\delta _0}\approx Bernoulli(\varepsilon)\hfill
\end{align}
where $\delta _0$ denotes $p(x_{ui}=0|a_{ui}=0)=1$; ${\eta _{ui}}$ is the prior probability of exposure. Here we relax function $\delta _0$ as $Bernoulli(\varepsilon)$ to make model more robust, where $\varepsilon$ is a small constant (e.g. $\varepsilon$=0.001). When $a_{ui}=0$, we have $x_{ui}\approx 0$, since when the user does not know the item he can not consume it. When $a_{ui}=1$, when the user has learned the item, he will decide whether or not to consume the item based on his preference. $x_{ui}$ can be generated with the classic preference model (e.g. matrix factorization)\footnote{Here the preference model is slightly different from the original model presented in work \cite{liang2016modeling} in that we employ Bernoulli likelihood instead of Gaussian likelihood on $x_{ui}$. In fact, Bernoulli likelihood is more natural for the binary variable \cite{chen2018modeling}.} and factorized by the latent vectors $p_u$ and $q_i$, which respectively characterize latent preferences of the user $u$ and latent attributes of the item $i$. To facilitate the description, here we collect the parameters of the preference model as $\theta=\{p_u,q_i\}_{u \in U, i \in I}$. Also, we remark that it would be straightforward to replace the matrix factorization with more sophisticated models such as factorisation machines \cite{rendle2012factorization} or neural networks \cite{he2017neural}, whenever needed.
\subsection{Analysis of EXMF from variational perspective}
The marginal likelihood of EXMF is composed of a sum over the marginal likelihood of individual datapoint $\log p(X) = \sum\limits_{ui} {\log p({x_{ui}})}$, which can be rewritten as:
\begin{align}
\log p({x_{ui}}) &= {E_q}[\log p({x_{ui}},{a_{ui}}) - \log q({a_{ui}}|{x_{ui}})]\notag \\
 &+ {E_q}[\log p({a_{ui}}|{x_{ui}}) - \log q({a_{ui}}|{x_{ui}})] \notag \\
 &= L(\theta ,q;{x_{ui}}) + KL(q({a_{ui}}|{x_{ui}})||p({a_{ui}}|{x_{ui}}))
\end{align}
where $q(a_{ui}|x_{ui})$ is defined as an approximated variational posterior of $a_{ui}$.Since the second KL-divergence term is non-negative, optimizing marginal likelihood can be translated to optimize the evidence lower bound (ELBO) $L(\theta ,q;{x_{ui}})$ w.r.t. both the variational posterior and the preference parameters $\theta$. Classic variational methods \cite{hoffman2013stochastic} usually employ conjugate variational distribution and individual variational parameters\footnote{Note that the EM algorithm presented in \cite{liang2016modeling} is a special case of the classic variational inference.}, i.e. $q({a_{ui}}|{x_{ui}})=Bernoulli(\gamma_{ui})$. For convenience we collect variational parameters $\gamma _{ui}$ as matrix $Y$. Then, the ELBO can be transformed into:
\begin{align}
L(\theta ,Y ;X) &= \sum\limits_{ui} {{E_q}[\log p({x_{ui}},{a_{ui}}) - \log q({a_{ui}}|{x_{ui}})]} \notag \\
 &= \sum\limits_{ui} {{\gamma _{ui}}\ell ({x_{ui}},\sigma (p_u^ \top {q_i}))}  + \sum\limits_{ui} {f({\gamma _{ui}})} \label{eq:al}
\end{align}
The EBLO is composed of the two terms. The first term is a weighted Cross-Entropy loss for the predicted preference, where $\ell(a,b)=alog(b)+(1-a)log(1-b)$. The second term is a loss function w.r.t $\gamma_{ui}$:
\begin{align}
f({\gamma _{ui}}) =(1 - {\gamma _{ui}})\ell ({x_{ui}},\varepsilon ) + \ell ({\gamma _{ui}},{\eta_{ui}}) - \ell ({\gamma _{ui}},{\gamma _{ui}})
\end{align}

 \textbf{Exposure probability as the data confidence.} One observation from equation (\ref{eq:al}) is observed that the variational parameters $\gamma_{ui}$, which characterize the probability of the event that user $i$ is exposed to item $j$, act as the confidence weights of the corresponding data to infer the preference parameters $\theta$ ($\theta=\{p_u,q_i\}_{u \in U, i \in I}$). This is clear by considering the following fact: when $\gamma_{ui}$ becomes larger (or smaller), the inferred user and item factors $p_u,q_i$ make more (or less) contributions on the objective function. This finding is consistent with our intuitions. Only if the user has been exposed to the item, can he decide whether or not to consume the items based on his preference. Thus, the data with larger exposure are more reliable in deriving user's preference.

  \textbf{Weaknesses.} Although EXMF is capable of adaptively deriving the confidence of the data, it has two critical weaknesses: (1) Calculating gradients over all the unobserved data EXMF $(O(n\times m))$ is computational expensive and thus it practical use is limited. Although some sampling strategies can be used to speed up the algorithm, the gradient estimator exhibits high variance. Typically, in real world large datasets, each user will only be exposed to a relatively small fraction of the potential items that they could interact with. That is, the $\gamma _{ui}$ of most data are small and they make limited contribution on updating parameters $\theta$. Existing sampling strategy will usually draw uninformative data with small $\gamma_{ui}$, which deteriorates the convergence and even accuracy of the model. (2) EXMF assumes user-independent posteriors of user's exposure. On the one hand, the number of variational parameters $\gamma _{ui}$ grows quickly with the number of users and items ($n \times m$). This will pose over-fitting and efficiency problems.  On the other hand, Independent assumption of users' exposure is not practical in real world. Typically, users with directly social relations, co-purchased items or common communities will exhibit correlations in their exposure.

Thus, we are interested in, and propose a solution to two related problems:

\begin{enumerate}
\item A novel variational model of user's exposure that can both capture users' correlation and employ fewer variational parameters to speed up inference and alleviate overfitting.

\item A sampling strategy that can draw informative training instances to speed up gradient estimation and reduce sampling variance.
\end{enumerate}

\section{SamWalker}
To solve the above problems, as illustrated in Figure \ref{fg:model}, we first consider the correlations between socially connected users and propose a new social network-based recommendation method SamWalker, that replaces individual variational parameters with a social context-aware function: $Y = g_{\varphi}(X,T)$. Specifically, we design a transformation function $g_{\varphi}$ with parameters $\varphi$ that map the local social context of the user, i.e. whether his direct or indirect social friends have consumed the item, into the probability of his exposure to the item. It is reasonable since users usually get item information from social network and their exposure to items depend on their local social contexts. An idea of modeling transformation function $g_{\varphi}$ is to iteratively simulate the information spread via the social network. Similar to the PageRank algorithm \cite{page1999pagerank}, the label of user's exposure is initially set according to his consumption (\ie ${\gamma _{ui}^{(0)}}=x_{ui}$). Then, all users spread their item information to their connected friends via the social network as illustrated in Figure \ref{fg:model}. The spread process is repeated until a global stable state is achieved. In each step, users collect information from the connected social friends (neighbors) and reconstruct their exposure as follows:
\begin{align}
{\gamma _{ui}^{(t+1)}} = {(1-c)}{\gamma_{ui}^{(0)}} + \sum\limits_{k \in {\mathcal T_u}} {c\varphi_{uk}}{\gamma _{ki}^{(t)}} \label{eq:it}
\end{align}
The parameter $c$ $(0\le c \le 1)$ specifies the relative contribution from the social friends and the initial label. $\varphi_{uk}$ is defined as the edge weight, which balances the heterogenous effect from different graph neighbors ($k\in \mathcal T_u$) and meets $\sum_{k\in \mathcal T_u}\varphi_{ik}=1$. Overall, SamWalker replaces $\gamma_{ui}$ with a social context-aware function $g$ parameterized by $\varphi$, to which the equation (\ref{eq:it}) converges:
\begin{align}
Y = g_{\varphi}(X,T) \equiv\mathop {\lim }\limits_{t \to \infty } {Y^{(t)}}=  {(I - cW )^{ - 1}}(1 - c)X \label{eq:gl}
 \end{align}
 where we collect variables $\gamma _{ui}^{(t)}$ for every user-item pairs $(u,i)$ as a matrix $Y^{(t)}$. Also, we collect $\varphi_{uk}$ as a matrix $W$, in which $W_{uk}=\varphi_{uk}$ for connected user pairs and $W_{uk}=0$ for others. As we can see from equation (\ref{eq:gl}), SamWalker replaces the posterior expectation of user's exposure with a weighted combination of the users' consumption in his social network. The weight matrix $(I - cW )^{ - 1}$ is a graph or diffusion kernel \cite{zhou2004learning}, which has been widely adopted to measure node proximity in the network and depends on the edge weight parameters $\varphi$ for every social ties. The inference of user's exposure can benefit from the knowledge of his similar social friends. Overall, SamWalker is capable of capturing social correlations between users and reduces the number of variational parameters from $O(n\times m)$ to $O(|E|)$, where $|E|$ denotes the number of edges in the social network. By iteratively learning the transformation function and the preference model, SamWalker can adaptively specify different weights to different data based on users' social contexts.



\section{SamWalker++}
A key limitation of SamWalker is that it requires additional social network data to model users' correlation, which may not be easy to collect in many recommender systems. To deal with this problem, we further propose SamWalker++, which does not use any side information. SamWalker++ models data confidence with a constructed pseudo-social network. As illustrated in Figure \ref{jiashiyi}, in the pseudo-social network similar users are connected with specific additional nodes so that the learning of a user's exposure can benefit from the information of his similar users. We introduce two kinds of nodes:
\begin{itemize}
\item \textbf{Item nodes}: Note that the users who have interacted with common items tend to have similar exposure. Thus, we introduce item nodes as bridges and link the user-item pairs with positive feedback, so that the inference of one's exposure can benefit from the knowledge of his similar users with co-purchased items.
\item \textbf{Community nodes:} Motivated by the social psychology statement \cite{palla2005uncovering,zhou2011understanding} that users are clustered into some content-sharing communities, we deduce the latent communities among users and introduce community nodes as medium to transfer the knowledge of users' exposure along the community. That is, as shown in Figure \ref{jiashiyi}, we give links for the user-community pairs if the user belongs to the community. Note that pre-computing users' community with existing community discovering algorithms is not-optimal, as the rich supervised signals from users' exposure have not been exploited. Thus, we prefer an end-to-end model. We first connect each user-community pair and initialize the community distribution for each user as an uniform distribution. We then adaptively update users' community distribution by optimizing the objective function with the training process going on.
\end{itemize}

 Given a pseudo-social network, we devise a novel exposure model on the network and specify data confidence weights with a network-aware function $Y = {h_\varphi }(X,G)$, where $G$ denotes the constructed pseudo-social network encoding similarity among users. That is, we design a transformation function $h$ with parameters $\varphi$ that maps the behaviors of the target user and his similar users into the probability of his exposure to the item. This way, the rich knowledge from these similar users can be transferred to learn the target user, which mitigates over-fitting problem and boosts inference accuracy. Similar to SamWalker, a promising way of modeling transformation function $h_{\varphi}$ is to iteratively simulate the knowledge flowing along the pseudo-social network. We initially set the label of users' exposure with his consumption and then reconstruct exposure with the information from their connected users. We model two kinds of information propagation.

\textbf{Along item nodes.} On the one hand, the users with co-purchased items provide knowledge on the target user's exposure. We define the message from this kind of similar users to the target user $u$ along the co-purchased items as:
\begin{align}
m_{uj}^{u \leftarrow i \leftarrow u} = \sum\limits_{i \in {N_u} \cap I} {\sum\limits_{v \in {N_i}} {\varphi _{ui}^{u \leftarrow i}\varphi _{iv}^{i \leftarrow u}\gamma _{vj}^{(t)}} } \label{eq:ch1}
\end{align}

where $N_u$ and $N_i$ denote the neighbor nodes sets of the user $u$ and item $i$; $\varphi _{ui}^{u \leftarrow i}, \varphi _{iv}^{i \leftarrow u}$ denote edge weights, balancing the contributions of information from different edges and meeting $\sum\limits_{i \in {N_u} \cap I} {\varphi _{ui}^{u \leftarrow i}}  = 1,\sum\limits_{v \in {N_i}} {\varphi _{iv}^{i \leftarrow u}} = 1$. The product of $\varphi _{ui}^{u \leftarrow i}$ and $\varphi _{iv}^{i \leftarrow u}$ can be intepreted as path strength for $u \leftarrow i \leftarrow v$, characterizing the strength of information flowing from the user $v$ to the target user $u$. As we can see, the users with more and stronger paths, suggesting that they exhibit more similarity with the target user, will bring more information on learning.

\textbf{Along community nodes.} On the other hand, users belonging to common communities will also exhibit correlations in their exposure. We define the message from this kind of similar users as:
\begin{align}
m_{uj}^{u \leftarrow c \leftarrow u} = \sum\limits_{c \in {N_u} \cap C} {\sum\limits_{v \in {N_c}} {\varphi _{uc}^{u \leftarrow c}\varphi _{cv}^{c \leftarrow u}\gamma _{vj}^{(t)}} } \label{eq:ch2}
\end{align}
where ${\varphi _{uc}^{u \leftarrow c}}$ and ${\varphi _{cv}^{c \leftarrow u}}$ denote edge weights, balancing effect of different information edges and meeting $\sum\limits_{c \in {N_u} \cap C} {\varphi _{uc}^{u \leftarrow c}}  = 1,\sum\limits_{v \in {N_c}} {\varphi _{cv}^{c \leftarrow u}} = 1$. Intuitively, $\varphi _{uc}^{u \leftarrow c}$ can be interpreted as user's community distribution and $\varphi _{cv}^{c \leftarrow u}$ as the extent to which the community member $v$ exposes to the community $c$. The inference of user $u$'s exposure can refer to the exposure of other community members. The knowledge flows along the path $u \leftarrow c \leftarrow v$ with strength $\varphi _{uc}^{u \leftarrow c}\varphi _{cv}^{c \leftarrow u}$.

\textbf{Aggregation.} We now aggregate the two messages to refine the target user $u$'s exposure:
\begin{align}
\gamma _{uj}^{(t + 1)} = c({b_u}m_{uj}^{u \leftarrow i \leftarrow u} + (1 - {b_u})m_{uj}^{u \leftarrow c \leftarrow u}) + (1-c)\gamma _{uj}^{(0)} \label{eq:ch3}
\end{align}
The parameter $c$ $(0\le c \le 1)$ specifies the relative contributions from the initial label and the connected\footnote{Here we define the connected users as the users with common items or common communities.} similar users. $b_u$ balances the contributions of these two messages. Moreover, with the transitivity of similarity, our similar users of similar users, or even higher-order similar users, are potentially similar to us. These high-order similar users, as abundant information resources, usually provide useful knowledge of the target user's exposure. Thus, referring to SamWalker, we stack multi-stage of refinement as illustrated in Figure \ref{fg:model} so that the inference of a user's exposure can benefit from high-order connected users. Overall, SamWalker++ replace $\gamma_{ui}$ with a network-aware function $h$ parameterized by $\varphi  = \{ {\varphi ^{u \leftarrow i}},{\varphi ^{i \leftarrow u}},{\varphi ^{u \leftarrow c}},{\varphi ^{c \leftarrow u}},b\}$, to which equations (\ref{eq:ch1}),(\ref{eq:ch2}),(\ref{eq:ch3}) converge:
\begin{align}
Y = {h_\varphi }(X,G) \equiv \mathop {\lim }\limits_{t \to \infty } {Y^{(t)}} = {(I - cW^+)^{ - 1}}(1 - c)X
\end{align}
where $W^+ = B{\Phi ^{u \leftarrow i}}{({\Phi ^{i \leftarrow u}})^T} + (1 - B){\Phi ^{u \leftarrow c}}{({\Phi ^{c \leftarrow u}})^T}$. Also, we collect parameters $b_u$ for each user as a diagonal matrix $B$ and collect ${\varphi_{ui} ^{u \leftarrow i}}$ as a matrix ${\Phi ^{u \leftarrow i}}$, in which $\Phi _{ui}^{u \leftarrow i} = \varphi _{ui}^{u \leftarrow i}$ for connected user-item pair $(u,i)$ and $\Phi _{ui}^{u \leftarrow i} =0$ for others. Similar treatments are used for parameters ${\varphi ^{i \leftarrow u}},{\varphi ^{u \leftarrow c}},{\varphi ^{c \leftarrow u}}$.  SamWalker++ models user's exposure with a weighted combination of other users' consumption. Also, the weight matrix $(I - cW^+ )^{ - 1}$ is a graph or diffusion kernel \cite{zhou2004learning} characterizing users proximity in the pseudo-network, which naturally encodes similarity or even high-order similarity between users into the inference procedures, which boosts the inference accuracy.

 \begin{figure}[t!]
\centering
\includegraphics[width=0.49\textwidth]{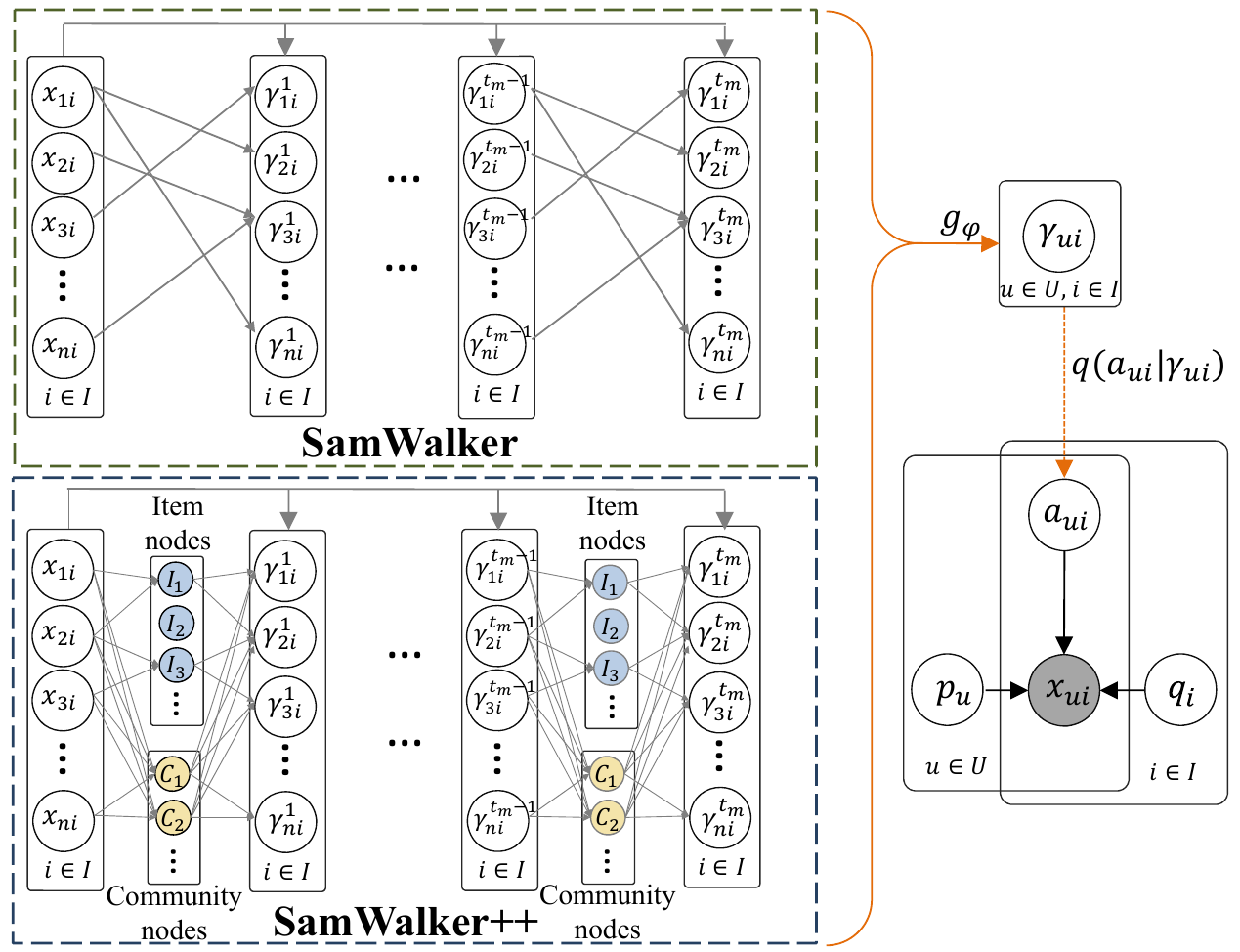}
 \caption{A schematic view of the proposed SamWalker (Left-upper) and SamWalker++ (Left-bottem). SamWalker directly simulates the information spread via the social network, while SamWalker++ transfers the knowledge between users by introducing specific item and community nodes as bridges.}
  \vspace{-0.3cm}
\label{fg:model}
\end{figure}

\subsection{Discussion}
The proposed SamWalker++ satisfies four desirable properties:

\textbf{Side-information free.} As we can see, SamWalker++ only uses implicit feedback data and does not require any side information (e.g. social network, item contents, tags). As side information are not available in many recommender system, SamWalker++ can be applied in more situations comparing with the methods using side information(e.g. SamWalker).

\textbf{Mitigate over-fitting.} Another advantage of SamWalker++ is its ability to mitigate over-fitting problem. One evidence supporting this point can be seen from the less parameters of SamWalker++ comparing with EXMF. SamWalker++ reduces the number of variational parameters from $O(nm)$ to $O(|X^+|+nK)$, where $|X^+|$ denotes the number of observed positive feedback in the dataset and $K$ denotes the number of inferred communities. Due to the sparsity of the implicit feedback data, the number of positive data ($|X^+|$) is much less than the the number of all data ($nm$).

How SamWalker++ mitigates over-fitting can be intepreted from another perspective. Referring to the analyses presented in \cite{chen2020fast}, let us draw an analogy with the floating balls in the water, as illustrated in Figure \ref{fg:over}. Learning exposure-based recommendation model according to equation (\ref{eq:al}) will give a force to pull up these positive balls (data) and push down these unobserved balls (data). For the vanilla EXMF model, the data confidence weights will easily achieve extreme values (${\gamma _{ui}} \approx 1$ for the positive data and ${\gamma _{ui}} \approx 0$ for the unobserved data), where the unobserved data make little contribution to training the recommendation model and the model will suffer from over-fitting. But in SamWalker++, users' exposure (data confidence) are connected with additional items or communities, which can be analogies as additional balls with elastic links connecting the data. Naturally, the unobserved data with more and stronger connections with positive data, will be pulled up higher due to the force from the links. This way, when the model has well fitted the data, the positive and the unobserved ball(data) will get stable at different depth in water, as the knowledge (force) will prorogation among the data. The over-fitting effect will be mitigated.

\textbf{Adaption.} The data confidence is defined with a parameterized function instead of pre-defined values. Thus, the data confidence will adaptively evolve with training process going on, which is more flexible and does not require manual tuning of confidence weights. Moreover, as SamWalker++ integrates other user information, some irrelevant factors and even data noises may be injected into the inference. Fortunately, SamWalker++ is trained in a supervised manner by maximizing data likelihood so that the model can adaptively recognize important edges and extract useful information from the network.

\textbf{Fast informative sampling.} Also, SamWalker++ supports fast informative sampling, which will be discussed in the next section.

\begin{figure}[t!]
\centering
\includegraphics[width=0.49\textwidth]{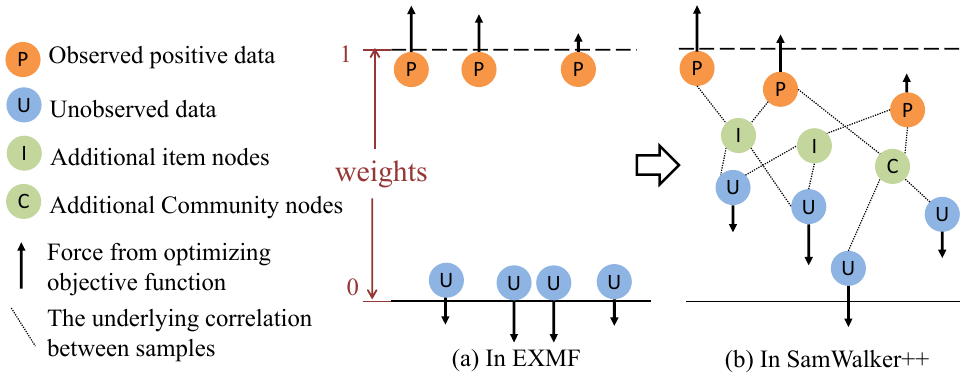}
 \caption{Illustration of how SamWalker++ mitigates over-fitting}
\label{fg:over}
\end{figure}

\section{Inference with informative sampler}
\subsection{Random walker-based sampler}
\label{sap}
Stochastic gradient descent (SGD), as a promising solution to speed up training procedures, has been widely applied in recommendation. The sampling strategy plays an important role in SGD, as it determines which data are used to update parameters and how often. However, since the informative instances with larger confidence $\gamma_{ui}$ are usually buried in a large pile of uninformative ones, existing samplers usually fail to pick out informative data, leading to poor convergence and non-optimal performance. Thus, in this section, we develop a novel informative sampling strategy and address the following two key research questions: (Q1) Given the current learned data confidence $\gamma_{ui}$, how to define the informative sampling distribution? (Q2) Given informative sampling distribution, how to draw instances efficiently?

For the question (Q1), intuitively, the informative data with larger confidence $\gamma_{ui}$ should be sampled with larger probability, since these terms make more contribution to the objective function. In fact, we have the following lemmas:
\newtheorem{lem}{Lemma}
\begin{lem}
\label{la1}
To evaluate the unbiased gradient of $L$ w.r.t $\theta$, the sampling strategy with distribution $p_{ui} \propto \gamma_{ui}$ can reduce sampling variance.
\end{lem}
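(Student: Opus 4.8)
The plan is to treat the minibatch gradient as an importance-sampling estimator and then minimize its variance over the choice of sampling distribution. First I would observe that, because the second term $\sum_{ui} g(\gamma_{ui})$ in the ELBO of equation (\ref{eq:al}) does not depend on $\theta$, the quantity we wish to estimate is $\nabla_\theta L = \sum_{ui} \gamma_{ui}\, \nabla_\theta \ell(x_{ui}, \sigma(p_u^\top q_i))$. Given a sampling distribution $p_{ui}$ (with $p_{ui} > 0$ wherever $\gamma_{ui} \neq 0$), I would form the single-draw estimator $\hat g = \gamma_{ui}\,\nabla_\theta\ell(x_{ui},\sigma(p_u^\top q_i))/p_{ui}$ with $(u,i)\sim p$, and verify by a one-line computation that $E_p[\hat g] = \sum_{ui} p_{ui}\cdot \gamma_{ui}\nabla_\theta\ell(x_{ui},\sigma(p_u^\top q_i))/p_{ui} = \nabla_\theta L$, so the estimator is unbiased for every admissible $p$. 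Hence the sampling distribution affects only the variance, not the bias.

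Next I would reduce variance minimization to a clean optimization problem. Since $\mathrm{Var}_p(\hat g) = E_p[\|\hat g\|^2] - \|\nabla_\theta L\|^2$ and only the first term depends on $p$, minimizing the variance is equivalent to minimizing $V(p) = \sum_{ui} \gamma_{ui}^2 \|\nabla_\theta\ell(x_{ui},\sigma(p_u^\top q_i))\|^2 / p_{ui}$ subject to $\sum_{ui} p_{ui} = 1$. A Lagrange-multiplier (equivalently, Cauchy--Schwarz) argument gives the minimizer $p_{ui}^\ast \propto \gamma_{ui}\,\|\nabla_\theta\ell(x_{ui},\sigma(p_u^\top q_i))\|$. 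Under the working assumption that the per-datum gradient norms are approximately homogeneous across $(u,i)$ --- plausible because $\sigma(\cdot)$ and $\ell$ are bounded and the latent factors are regularized --- this collapses to $p_{ui}^\ast \propto \gamma_{ui}$, which is exactly the proposed sampler.

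To make the word \emph{reduce} precise I would compare against the default uniform sampler $p_{ui} = 1/(nm)$. Writing $\Gamma = \sum_{ui}\gamma_{ui}$, under the homogeneous-gradient assumption the two variances are proportional to $(nm)\sum_{ui}\gamma_{ui}^2$ and $\Gamma^2 = \bigl(\sum_{ui}\gamma_{ui}\bigr)^2$ respectively, and Cauchy--Schwarz gives $\bigl(\sum_{ui}\gamma_{ui}\bigr)^2 \le (nm)\sum_{ui}\gamma_{ui}^2$, so the $\gamma$-proportional sampler never increases the variance, with strict reduction whenever the $\gamma_{ui}$ are not all equal --- precisely the sparse-exposure regime the paper targets.

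The main obstacle is the homogeneity assumption on the gradient norms: the genuinely optimal sampler carries the extra factor $\|\nabla_\theta\ell(x_{ui},\sigma(p_u^\top q_i))\|$, and dropping it is what yields the tractable $p\propto\gamma$ rule. I would therefore either state this assumption explicitly, or instead bound the gradient norm above and below (using boundedness of $\sigma$ and of $\|q_i\|,\|p_u\|$) to show that $p\propto\gamma$ is within a constant factor of the optimum and, in any case, stays below the uniform-sampler variance.
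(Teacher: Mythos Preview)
Your argument is the standard importance-sampling variance computation and is correct: write the full gradient as $\sum_{ui}\gamma_{ui}\nabla_\theta\ell_{ui}$, form the unbiased single-draw estimator $\gamma_{ui}\nabla_\theta\ell_{ui}/p_{ui}$, minimize $E_p[\|\hat g\|^2]$ over the simplex to obtain $p^\ast_{ui}\propto\gamma_{ui}\|\nabla_\theta\ell_{ui}\|$, and then invoke (approximate) homogeneity of the per-datum gradient norms to drop the extra factor. Your Cauchy--Schwarz comparison against the uniform sampler is also the right way to make ``reduce'' precise.

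The paper itself only states that the proof is ``presented in appendix,'' and the appendix is not included in the source you were given, so a line-by-line comparison is not possible. That said, the proof of Lemma~\ref{la2} immediately following it sets up exactly the same importance-sampling estimator (sample $(a,b)\sim p$ with $p_{ui}=\gamma_{ui}/Z$ and reweight), so the paper's Lemma~\ref{la1} argument is almost certainly the same variance computation you give. Your write-up is, if anything, more explicit than what the paper gestures at: you correctly isolate the homogeneity assumption on $\|\nabla_\theta\ell_{ui}\|$ as the step that turns the exact optimum into the tractable rule $p\propto\gamma$, and you offer a sensible fallback (uniform bounds on the gradient norm via boundedness of $\sigma$ and the latent factors) to make the claim robust without that assumption. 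That caveat is worth keeping in the final version, since the lemma as stated is only literally true under such an assumption or in the weaker ``never worse than uniform'' sense you derive.
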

The proof is presented in appendix.

\begin{lem}
\label{la2}
To evaluate the unbiased gradient of $L$ w.r.t $\theta$, the sampling strategy with distribution $p_{ui} \propto \gamma_{ui}$ can speed up gradient calculation.
\end{lem}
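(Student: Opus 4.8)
The plan is to begin from the gradient of the ELBO in (\ref{eq:al}). Since the second term $\sum_{ui} g(\gamma_{ui})$ is independent of $\theta$, the exact gradient is $\nabla_\theta L = \sum_{ui} \gamma_{ui} \nabla_\theta \ell(x_{ui}, \sigma(p_u^\top q_i))$. For any sampling distribution $p_{ui}$, drawing a pair $(u,i) \sim p$ and forming the importance-weighted estimator $\hat g = (\gamma_{ui}/p_{ui}) \nabla_\theta \ell(x_{ui}, \sigma(p_u^\top q_i))$ yields an unbiased estimate of $\nabla_\theta L$, exactly the estimator already used in Lemma~\ref{la1}. The first step is therefore to record this estimator and isolate its per-sample computational cost: evaluating $\hat g$ requires (i) the local loss gradient, which depends only on $p_u$ and $q_i$ and is cheap, and (ii) the confidence weight $\gamma_{ui}$ of the sampled instance.

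The central step is to substitute the proposed distribution $p_{ui} = \gamma_{ui}/Z$, with $Z = \sum_{u'i'}\gamma_{u'i'}$, into the estimator and observe the cancellation $\hat g = Z\, \nabla_\theta \ell(x_{ui}, \sigma(p_u^\top q_i))$. The factor $\gamma_{ui}$ disappears from the gradient term, so computing $\hat g$ no longer requires evaluating the individual confidence weight of the sampled instance. Because $\gamma_{ui}$ is the $(u,i)$ entry of the diffusion kernel $Y = (I - cW^+)^{-1}(1-c)X$, evaluating it couples all users through the matrix inverse and is precisely the expensive part of a naive estimator; the proposed sampler removes this cost, leaving only the cheap local gradient and the single shared scalar $Z$. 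I would state this formally as a reduction of the per-iteration gradient cost from $O(nm)$ (the work needed to obtain $\gamma_{ui}$ via the kernel) to $O(1)$ work per sample, which is the content of the speedup claim.

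The remaining step, and the main obstacle, is to argue that the samples $(u,i) \sim p$ can themselves be produced without materialising all $O(nm)$ confidence weights. Here I would appeal to the random-walk sampler of Section~\ref{sap}, whose behaviour along the (pseudo-)social network is designed to realise $p_{ui}\propto\gamma_{ui}$ in time linear in the walk length rather than in $nm$; combined with the cancellation above, this gives a gradient estimate whose cost scales only with the number of sampled instances. A convenient simplification is that the leftover scalar $Z$ is common to every sampled pair within an iteration, so it acts as a uniform rescaling of the stochastic gradient and can be folded into the learning rate, meaning the descent direction needs neither $Z$ nor any individual $\gamma_{ui}$. The delicate point to handle carefully will therefore be verifying that the walk-based draws match $p_{ui}\propto\gamma_{ui}$ exactly, so that the cancellation is valid, while bounding the walk length required for an accurate sample; this is where the row-stochasticity of $W^+$ (from the normalisation constraints in (\ref{eq:ch1})--(\ref{eq:ch2})) and the geometric decay of $(I-cW^+)^{-1}$ with factor $c$ would enter.
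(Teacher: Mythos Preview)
Your proposal is correct and follows essentially the same route as the paper: both arguments write the importance-weighted estimator, substitute $p_{ui}=\gamma_{ui}/Z$, and observe that the per-sample weight $\gamma_{ui}$ cancels so that the estimator no longer requires evaluating the confidence weight of each drawn instance. Your discussion of the residual scalar $Z$, the cost of drawing the samples via the random walk, and the geometric decay in $c$ goes beyond what the paper's short proof actually states (it stops at the cancellation and defers the sampling mechanism to Lemma~\ref{la3}), but none of it conflicts with or alters the paper's argument.
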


\begin{proof}
If the sampling distribution is proportional to the data confidence ($p_{ui} =\gamma_{ui}/Z$, where $Z = \sum\limits_{u \in U,i \in I} {{\gamma _{ui}}} $), we have:
\begin{align}
\frac{{\partial L}}{{\partial \theta }}{\rm{ = }}\sum\limits_{u \in U,i \in I} {{\gamma _{ui}}\frac{{\partial {\ell _{ui}}}}{{\partial \theta }}}  = \sum\limits_{u \in U,i \in I} {\frac{{{p_{ui}}}}{Z}\frac{{\partial {\ell _{ui}}}}{{\partial \theta }}} {\rm{ = }}E_p[\sum\limits_{(a,b) \in p} {\frac{1}{Z}\frac{{\partial {\ell _{ab}}}}{{\partial \theta }}} ] \label{eq:sa}
\end{align}
where the confidence weights $\gamma_{ui}$ have been absorbed into the sampling bias and does not need calculating in each iteration, which saves much time.
\end{proof}

The question (Q2) is more challenging, as the sampling distribution will evolve over a large data space as training process going on. A naive implementation of informative sampler is to estimate and rank the current learned confidence weights $\gamma_{ui}$ for every user-item pairs and then pick out the informative data based on $\gamma_{ui}$. It is apparently inefficient and can not satisfy practical requirement. To avoid estimating confidence weights, we propose the following sampling strategy for our SamWalker and SamWalker++:

\textbf{Random Walk-based sampling strategy.} For the target user $u$, we perform the random walk along the network from user node $u$ to sample the informative feedback data of user $u$. At each step $t$ of random walk, supposing we are at a certain user $v$, we have two options:

(1) With probability $c$, we terminate the random walk. We stay at user $v$ and randomly (uniformly) select a portion of ($N_v/\beta$) items that have been consumed by the user $u$, where $N_v$ denotes the number of items consumed by the user $v$. Then we add the feedback data of user $u$ on these selected items into sampled set $S$.

(2) With probability $(1-c)$, we continue our random walk. For the SamWalker model, we randomly select one of $v$'s connected friends $r$ based on personalized tie strength $\varphi_{vr}$ and walk to $r$ for the next walk step; For the SamWalker++ model, we first flip a coin based on $a_v$ to decide which kinds of nodes (items or communities) we would like to walk along. If we choose item nodes (or community nodes) as a medium, then we randomly walk to one of $v$'s connected items $i$ (or communities $c$) based on edge weights ${\varphi _{vi}^{u \leftarrow i}}$ (${\varphi _{vc}^{u \leftarrow c}}$). After that, we randomly walk from $i$ (or $c$) to its connected user node $r$ based on ${\varphi _{ir}^{i \leftarrow u}}$ (or ${\varphi _{cr}^{c \leftarrow u}}$) for the next walk step.

Our random walk-based sampling strategy satisfies the following desirable property:
\begin{lem}
\label{la3}
The sampling probability of the above random walk-based strategy is proportional to the data confidence.
\end{lem}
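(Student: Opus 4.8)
The plan is to show that the distribution over user nodes at which the walk halts, composed with the uniform item-selection step, reproduces exactly the diffusion kernel that defines $\gamma_{ui}$. First I would take the closed form $Y=(1-c)(I-cW)^{-1}X$ for SamWalker (resp.\ $Y=(1-c)(I-cW^{+})^{-1}X$ for SamWalker++) and expand the kernel as a Neumann series,
\begin{align}
\gamma_{ui}=(1-c)\sum_{t=0}^{\infty}c^{t}\,[W^{t}X]_{ui}=(1-c)\sum_{t=0}^{\infty}c^{t}\sum_{v}[W^{t}]_{uv}\,x_{vi},\notag
\end{align}
which converges because $W$ is row-stochastic (each row of $\varphi$ sums to one) and $0\le c<1$. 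The factor $[W^{t}]_{uv}$ is precisely the probability that a length-$t$ walk moving from each node to a neighbor $r$ with probability $\varphi_{vr}$ reaches user $v$ after starting at $u$, so each term already has a random-walk meaning.

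Next I would match each term of this series to an event of the sampler. At every visited user the walk takes one more propagation step with probability $c$ (following $\varphi$) and halts otherwise; hence the probability of halting exactly at user $v$ is $\sum_{t\ge0}(1-c)c^{t}[W^{t}]_{uv}=[(1-c)(I-cW)^{-1}]_{uv}$, i.e.\ the $(u,v)$ entry of the kernel. Conditioned on halting at $v$, uniformly keeping a subset of $N_v/\beta$ out of the $N_v$ items consumed by $v$ includes any fixed consumed item with probability $1/\beta$, independently of $v$ and $i$, after which the pair $(u,i)$ is recorded. Multiplying and summing over the halting node gives
\begin{align}
p_{ui}=\sum_{v}[(1-c)(I-cW)^{-1}]_{uv}\,\frac{x_{vi}}{\beta}=\frac{1}{\beta}\,\gamma_{ui},\notag
\end{align}
so $p_{ui}\propto\gamma_{ui}$, establishing the claim for SamWalker.

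For SamWalker++ the only change is the one-step transition law, and the substantive work is to verify that the two-hop move user$\,\to\,$(item or community)$\,\to\,$user realizes $W^{+}=A\Phi^{u\leftarrow i}(\Phi^{i\leftarrow u})^{\top}+(1-A)\Phi^{u\leftarrow c}(\Phi^{c\leftarrow u})^{\top}$ as a genuine row-stochastic transition matrix. Since the walk first flips a coin with bias $a_v$ to choose item versus community nodes and then moves with the normalized weights $\varphi^{u\leftarrow i},\varphi^{i\leftarrow u}$ (resp.\ the community weights), the induced user-to-user transition probability from $v$ to $r$ is $a_v[\Phi^{u\leftarrow i}(\Phi^{i\leftarrow u})^{\top}]_{vr}+(1-a_v)[\Phi^{u\leftarrow c}(\Phi^{c\leftarrow u})^{\top}]_{vr}=W^{+}_{vr}$; the normalizations $\sum_i\varphi^{u\leftarrow i}_{ui}=1$, $\sum_v\varphi^{i\leftarrow u}_{iv}=1$, and their community analogues guarantee each factor sums to one, so $W^{+}$ is stochastic and the Neumann-series argument above applies verbatim with $W$ replaced by $W^{+}$.

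I expect the transition-matrix identification in the last step to be the main obstacle: one must confirm that composing the coin flip with the two successive normalized hops yields exactly the product form of $W^{+}$ and that the result is stochastic, so that $[(W^{+})^{t}]_{uv}$ is a legitimate $t$-step hitting probability. The only other delicate point is the bookkeeping of $c$ versus $1-c$ between the continue/halt decisions and the propagation coefficient in the fixed-point equations~(\ref{eq:it})--(\ref{eq:ch3}); once the continuation weight is aligned with the propagation coefficient $c$, the geometric series telescopes into the kernel and the remaining computation is routine.
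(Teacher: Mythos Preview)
Your proposal is correct and follows essentially the same approach as the paper: identify the one-step user-to-user transition with $W$ (resp.\ $W^{+}$), expand the diffusion kernel as the Neumann series $\sum_{t\ge 0}(1-c)(cW)^{t}$ to obtain the halting distribution, and combine with the uniform $1/\beta$ item-selection factor to get $p_{ui}=\gamma_{ui}/\beta$. You are more explicit than the paper about verifying that the two-hop item/community move realizes $W^{+}$ and that $W^{+}$ is row-stochastic, and you correctly flag the $c$ versus $1-c$ swap between the algorithm description and the fixed-point formula---the paper's own proof silently uses $c$ as the continuation probability, in line with the formula rather than the prose.
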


\begin{proof}
It is easy to check that transformation probability from one user $u$ to another user $r$ in the step (2) is consistent with the $(u,r)$-th element of the matrix $W$ for SamWalker or $W^+$ for SamWalker++. Further we can find that the $(u,r)$-th element of matrix ${(cW^+ )^t}(1-c)$ (or ${(cW )^t}(1-c)$) is the probability of starting from the source user $u$ and terminating at the user $r$ in step $t$. Correspondingly, the $(u,i)$-th element of matrix ${(c\Phi )^t}(1-c)X$ represents the sampled probability of the user-item feedback data $x_{ui}$ in the step $t$. Sum over the probability in different steps, we have the sampled probability of the data as follow:
\begin{align}
P = \sum\limits_{t = 0}^\infty  {{{(c\Phi )}^t}(1 - c)X/\beta }  \propto Y
 \end{align}
which is proportional to the data confidence.
\end{proof}

Here we give a more intuitive explanation of our proposed random walk-based sampling strategy. The random walk from the target user $u$ will explore user's network and finally randomly arrive at a specific user $v$ based on their graph proximity. Note that the network is constructed with user social relations, communities, or co-purchased items. Higher edge weights or shorter distance between the user $v$ to the target user $u$, indicates more similarity between the two users and thus $v$ will be selected with higher probability. The corresponding items, which are exposed (consumed) to $u$'s similar users, are more likely exposed to the user $u$. Our random walk-based sampling strategy encodes the knowledge from other similar users and thus is capable of selecting informative instances.

In practice, we usually conduct $\alpha$ times random walk for each user to achieve more reliable mini-batch stochastic optimization. The parameters $\alpha, \beta$ control the batch size. Note that there is a chance for a single random walk to continue forever. In fact, we pay more attention to user's local social context and thus terminate the random walk when the number of steps exceeds a certain threshold ($t>t_m$). Concretely, when $t>t_m$, we uniformly walk to a random user in the system and sample the data as option (1).

\subsection{Inference of the edge weights $\varphi$}

Note that knowledge transfer between the data may also inject some irrelevant factors or even noises. Thus, we would like to train SamWalker and SamWalker++ in a supervised manner so that the model can adaptively recognize important edges and extract useful information along the network. We achieve this by optimizing the lower bound of margin likelihood (equation \ref{eq:al}) w.r.t parameters $\varphi$ with stochastic gradient methods. However, directly deriving gradient from transformation function $g_\varphi$ (equation (\ref{eq:gl})) involves matrix inversion and suffers from low efficiency. Alteratively, as illustrated in Figure \ref{fg:model}, we iteratively simulates information spread as equation (\ref{eq:it}), and stacks multi-layers neural network to infer the personalized edge weights. We also reparameterize $\varphi _{uv},{\varphi _{ui}^{u \leftarrow i}},{\varphi _{iv}^{i \leftarrow u}},{\varphi _{uc}^{u \leftarrow c},\varphi _{cv}^{c \leftarrow u}}$ with a Softmax transformation to deal with sum-to-one constraints. Backward prorogation can be easily conducted to infer tie strength $\varphi$, without requiring time-consuming matrix inversion. Further, mini-batch-based stochastic gradient methods can be employed to speed up the inference. Note that the data set $S$ from random walk strategy is sampled for updating the recommendation model, and may not be suitable for $\varphi$. Thus, we choose the uniform sampler. In each step, we randomly (uniformly) select a portion of $N_{SI}$ items and update tie strength based on users' exposure on these selected items. Overall, the inference of our SamWalker and SamWalker++ is presented in Algorithm \ref{al}.

\begin{algorithm}[t]
\footnotesize
\caption{Inference of  SamWalker and SamWalker++}
\begin{algorithmic}[1]
\label{al}
\STATE Initialize parameters randomly;
\WHILE {not converge}

\STATE Sample a set of data $S$ based on the random walk strategy as mentioned in section 6.
\STATE Update parameters $\theta$ of the preference model based on the estimated gradient on the sampled data [equation (\ref{eq:sa})].
\STATE Randomly select a portion of $N_{SI}$ items.
\STATE Update parameters $\varphi$ based on backward propagation along the neural network for the selected items [equations (\ref{eq:it},\ref{eq:ch1},\ref{eq:ch2},\ref{eq:ch3})].

\ENDWHILE
\end{algorithmic}
\end{algorithm}

\subsection{Complexity Analysis}

The time complexity of the inference of SamWalker and SamWalker++ is attributed to the following three parts: (1) In sampling step, we will conduct $\alpha$ times random walk for each user to generate sampled data set $S$. The time for this part is $O(\alpha n t_m +|E| +|S|)$, where $n$ denotes the number of users in the system and $|E|$ denotes the number of edges in the network; $t_m$ denotes the max depth of random walk and $|S|$ denotes the number of data in the set $S$. (2) When inferring the preference parameters $\theta$, we just estimate gradients on the sampled data $S$. The time for this step is $O(|S|d)$. (3) When inferring the parameters $\varphi$ of the transformation function, we conduct the gradient back propagation along the network for the selected $N_{SI}$ items. The complexity for this part is $O(N_{SI}|E|)$. Hence, the overall computational complexity is $O(\alpha n t_m+|S|d+N_{SI}|E|)$. Due to the sparsity of the recommendation data, users usually have limited social friends, interactions and communities. Note that $|E|=|T^+|$ for SamWalker and $|E|=nD+|X^+|$ for SamWalker++, where $|T^+|$ denotes the number of social relations and $|X^+|$ denotes the number of observed positive feedback. Also, similar to many recent works \cite{wu2016collaborative,rendle2009bpr} we usually let $|S|$ be five times as large as the number of observed data and let the number of selected items $N_{SI}$ be 100. Thus, our algorithm is efficient on sparse implicit feedback data.

\section{Experiments and analysis}

Our experiments are intended to address the following questions:
\begin{enumerate}[(Q1)]
\item  Do SamWalker and SamWalker++ outperform state-of-the-art recommendation methods?
\item How does SamWalker++ compare with SamWalker?
\item  How does the proposed sampling strategy perform?
\item  Is it beneficial to introduce item nodes and community nodes to infer the data confidence?
\item  How does the parameter $t_m$ (the max depth of prorogation) affect the recommendation performance?
\end{enumerate}
\subsection{Experimental protocol}
\label{se:da}

\textbf{Datasets.} Five datasets Epinions\footnote{\url{http://www.trustlet.org/epinions}}, Ciao\footnote{\url{http://www.cse.msu.edu/~tangjili/trust}}, LastFM\footnote{\url{https://grouplens.org/datasets/hetrec-2011/ }}, Moivelens-1M\footnote{\url{https://grouplens.org/datasets/movielens/}} and BookCrossings\footnote{\url{https://grouplens.org/datasets/book-crossing/}} are used in our experiments. These datasets contain users' feedback on the items. The datasets Epinions, Ciao and LastFM also contain users' social relations. The dataset statistics are presented in Table \ref{tb:da}. Similar to \cite{xiao2017learning,he2016fast}, we preprocess the datasets so that all items have at least three interactions and "binarize" user's feedback into implicit feedback. That is, as long as there exists some user-item interactions (ratings or clicks), the corresponding implicit feedback is assigned a value of 1. We also drop out items that have been consumed by too many (larger than 100) or too few (smaller than 3) users to moderate the popularity biases \cite{canamares2018should} in estimation. After that, we randomly choose 80\% of interactions for training and leave the remainder for testing.

\textbf{Hyper-parameters.} Grid search and 5-fold cross validation are used to find the best parameters. Also, we choose Adam as our optimizer. In our SamWalker and SamWalker++, we
we set $\eta_{ij}=0.5$, $\alpha=100$, $\beta=20$, $c=0.9$, and test $t_m$ of the search space $\{3,4,5,6\}$, learning rate of $\{0.001,0.01,0.1,1\}$ and decay of $\{0.0001,0.001,0.01,0.1,1\}$.  The setting of compared methods are referring to related works or validated in our experiments. All experiments are conducted on a server with 2 TiTanX GPUs, Intel E5-2620 CPUs and 256G RAM\footnote{Source code will be available at github \url{https://github.com/jiawei-chen/SamWalker}}.

\textbf{Compared methods.} The compared methods are as follows. Table \ref{tb:ch} also summarizes their characteristics.
\begin{itemize}
\item WMF(ALS)  \cite{hu2008collaborative}: The classic weighted matrix factorization model for implicit feedback data with memorization-based fast learning algorithm.
\item BPR \cite{rendle2009bpr}: The classic pair-wise method for recommendation, coupled with matrix factorization. For efficient learning, BPR employs uniform sampling strategy to draw the training instances.
\item EXMF \cite{liang2016modeling}: A probabilistic model that directly incorporates user's exposure to items into traditional matrix factorization. We refer to \cite{liang2016modeling} and choose an item-dependent prior of user's exposure.
\item FAWMF \cite{chen2020fast}: A fast matrix factorization model with adaptive confidence weights and memorization-based learning algorithm.
\item LightGCN \cite{he2020lightgcn}: State-of-the-art recommendation model with graph neural network on the user-item interaction graph.
\item SBPR\cite{zhao2014leveraging}: SBPR integrates social information into BPR by assuming that the items consumed by connected friends are ranked higher than those not.
\item SERec-Bo\cite{DBLP:conf/aaai/WangZYZ18}: A probabilistic model that extends the EXMF model with social influence on user's exposure. Here we choose SERec-Bo as a comparison since it performs better than SERec-Re.
\item SoEXBMF \cite{chen2018modeling}:  A probabilistic model that further extends the EXMF model with both social knowledge influence and social consumption influence.
\end{itemize}

\begin{table}[t!]
\centering
\tiny
\caption{Statistics of five datasets.}
\label{tb:da}
\begin{tabular}{|c|c|c|c|c|c|}
\hline
Datasets     & \begin{tabular}[c]{@{}c@{}}Number\\ of users\end{tabular} & \begin{tabular}[c]{@{}c@{}}Number\\ of items\end{tabular} & \begin{tabular}[c]{@{}c@{}}Number of \\ interactions\end{tabular} & \begin{tabular}[c]{@{}c@{}}Sparsity of\\ interactions\end{tabular} & \begin{tabular}[c]{@{}c@{}}Number of\\ social relations\end{tabular} \\ \hline
LastFM       & 1,892                                                     & 4,489                                                     & 52,668                                                            & 0.62\%                                                             & 25,434                                                               \\ \hline
Ciao         & 5,298                                                     & 19,301                                                    & 138,840                                                           & 0.14\%                                                             & 106,640                                                              \\ \hline
Epinions     & 21,290                                                    & 34,075                                                    & 333,916                                                           & 0.05\%                                                             & 414,549                                                              \\ \hline
Moivelens-1M & 6,040                                                     & 3,678                                                     & 1,000,177                                                         & 4.50\%                                                             & None                                                                 \\ \hline
BookCrossing & 13,097                                                    & 37,075                                                    & 473,846                                                           & 0.10\%                                                             & None                                                                 \\ \hline
\end{tabular}
\end{table}

\begin{table}[t!]
\caption{The characteristics of the compared methods.}
\center
\tiny
\label{tb:ch}
\begin{tabular}{|c|c|c|c|c|}
\hline
Methods     & Social?      & \begin{tabular}[c]{@{}c@{}}Exposure\\ -based?\end{tabular} & Sampling?    & Complexity                       \\ \hline
WMF(ALS)    & $\backslash$ & $\backslash$                                               & $\backslash$ & $O((n+m)d^3)$                    \\ \hline
BPR         & $\backslash$ & $\backslash$                                               & ${\surd}$    & $O((n+m+|S|)d)$                   \\ \hline
EXMF        & $\backslash$ & ${\surd}$                                                  & $\backslash$ & $O(nmd)$                         \\ \hline
FAWMF       & $\backslash$ & ${\surd}$                                                  & $\backslash$ & $O(|\mathbf{X}^+|(K+D))$         \\ \hline
LightGCN    & $\backslash$ & $\backslash$                                               & ${\surd}$    & $O((|S|+|\mathbf{X}^+|)d)$             \\ \hline
SBPR        & ${\surd}$    & $\backslash$                                               & ${\surd}$    & $O((n+m+|S|)d)$                  \\ \hline
SERec-Bo    & ${\surd}$    & ${\surd}$                                                  & $\backslash$ & $O(nmd)$                         \\ \hline
SoEXBMF     & ${\surd}$    & ${\surd}$                                                  & $\backslash$ & $O(nmd^2)$                       \\ \hline
SamWalker   & ${\surd}$    & ${\surd}$                                                  & ${\surd}$    & $O(\alpha n t_m+|S|d+N_{SI}|E|)$ \\ \hline
SamWalker++ & $\backslash$ & ${\surd}$                                                  & ${\surd}$    & $O(\alpha n t_m+|S|d+N_{SI}|E|)$ \\ \hline
\end{tabular}
\end{table}

\textbf{Evaluation Metrics.} We adopt the following metrics:
\begin{itemize}
\item Recall@K (Rec@K): This metric quantifies the fraction of consumed items that are in the top-K ranking list sorted by their estimated rankings. For each user $i$, we define $Rec(i)$ as the set of recommended items in top-K and $Con(i)$ as the set of consumed items in test data for user $i$. Then we have:
\begin{align}
    Recall@K&=\frac{1}{{|U|}}\sum\limits_{i \in U} {\frac{{|Rec(i)\cap Con(i)|}}{|Con(i)|}}
\end{align}
\item Precision@K (Pre@K): This measures the fraction of the top-K items that are consumed by the user:
 \begin{align}
    Precision@K&=\frac{1}{{|U|}}\sum\limits_{i \in U} {\frac{{|Rec(i)\cap Con(i)|}}{|Rec(i)|}}
\end{align}
\item Normalized Discounted Cumulative Gain (NDCG): it measures the quality of ranking:
\begin{align}
NDCG &= \frac{1}{{|U|}}\sum\limits_{i \in U} {\frac{{DCG_i}}{{{IDCG_i}}}}
\end{align}
where ${DC{G_i}}$ is defined as follow and ${{IDCG_i}}$ is the ideal value of ${{DCG_i}}$ coming from the best ranking.
\begin{align}
{DCG_i} &= \sum\limits_{j \in  Con(i)} {\frac{1}{{{{\log }_2}(P_{ij} + 1)}}}
\end{align}
where ${P_{ij}}$ represents the rank of the item $j$ in the recommended list of the user $i$.
\item Mean Reciprocal Rank (MRR): Referring to \cite{shi2012climf}, given the ranking lists, MRR is defined as follow:
\begin{align}
MRR = \frac{1}{{|U|}}\sum\limits_{i \in U} {\sum\limits_{j \in Con(i)} {\frac{1}{{{P_{ij}}}}\prod\limits_{k \in Con(i)} {\mathbf I\left( {{P_{ik}} \ge {P_{ij}}} \right)} } }
\end{align}
\end{itemize}

\begin{table*}[t!]
\centering
\scriptsize
\caption{The performance metrics of the compared methods. The mark '*' denotes the winner in that row, while the boldface font denotes the winner among the non-social recommendation methods.  The column `Impv1' indicates the relative performance gain of our SamWalker compared to the best results among all baselines, while the column `Impv2' indicates the relative performance gain of our SamWalker++ compared to the best results among the non-social methods. }
\label{tb:re}
\begin{tabular}{|c|c|c|c|c|c|c|c|c|c|c|c|c|c|}
\hline
\begin{tabular}[c]{@{}c@{}}Data-\\ sets\end{tabular}                      & Metrics & WMF    & BPR    & EXMF   & FAWMF            & \begin{tabular}[c]{@{}c@{}}Light-\\ GCN\end{tabular} & SBPR         & SeRec        & \begin{tabular}[c]{@{}c@{}}So-\\ EXBMF\end{tabular} & \begin{tabular}[c]{@{}c@{}}Sam-\\ Walker\end{tabular} & Impv1        & \begin{tabular}[c]{@{}c@{}}Sam-\\ Walker++\end{tabular} & Impv2   \\ \hline
\multirow{4}{*}{LastFM}                                                   & Pre@5   & 0.0928 & 0.1004 & 0.0957 & 0.1011           & 0.1051                                               & 0.0956       & 0.1018       & 0.1108                                              & 0.1177*                                               & 6.22\%       & \textbf{0.1099}                                         & 4.61\%  \\ \cline{2-14}
                                                                          & Rec@5   & 0.0841 & 0.0888 & 0.0859 & 0.0906           & 0.0934                                               & 0.0851       & 0.0907       & 0.1014                                              & 0.1072*                                               & 5.68\%       & \textbf{0.0983}                                         & 5.20\%  \\ \cline{2-14}
                                                                          & NDCG    & 0.3364 & 0.3485 & 0.3477 & 0.3242           & 0.3533                                               & 0.3405       & 0.3509       & 0.3617                                              & 0.3634*                                               & 0.48\%       & \textbf{0.3601}                                         & 1.93\%  \\ \cline{2-14}
                                                                          & MRR     & 0.2596 & 0.2601 & 0.2502 & 0.2643           & 0.2669                                               & 0.2553       & 0.2661       & 0.2932                                              & 0.2992*                                               & 2.05\%       & \textbf{0.2939}                                         & 10.15\%  \\ \hline
\multirow{4}{*}{Ciao}                                                     & Pre@5   & 0.0172 & 0.0144 & 0.0095 & 0.0143           & 0.0151                                               & 0.0156       & 0.0118       & 0.0181                                              & 0.0182*                                               & 0.25\%       & \textbf{0.0184}                                         & 6.98\%  \\ \cline{2-14}
                                                                          & Rec@5   & 0.0123 & 0.0125 & 0.0105 & 0.0092           & 0.0136                                               & 0.0124       & 0.0124       & 0.0152                                              & 0.0167*                                               & 10.38\%      & \textbf{0.0161}                                         & 18.25\% \\ \cline{2-14}
                                                                          & NDCG    & 0.1757 & 0.1759 & 0.1747 & 0.1641           & 0.1800                                               & 0.1774       & 0.1770       & 0.1827                                              & 0.1811                                                & -0.84\%      & \textbf{0.1834*}                                        & 1.89\%  \\ \cline{2-14}
                                                                          & MRR     & 0.0541 & 0.0464 & 0.0377 & 0.0446           & 0.0556                                               & 0.0478       & 0.0395       & 0.0587                                              & 0.0588                                                & 0.24\%       & \textbf{0.0609*}                                        & 9.45\%  \\ \hline
\multirow{4}{*}{Epinions}                                                 & Pre@5   & 0.0095 & 0.0087 & 0.0079 & 0.0090           & 0.0098                                               & 0.0088       & 0.0073       & 0.0119                                              & 0.0149                                                & 24.43\%      & \textbf{0.0165*}                                        & 67.72\% \\ \cline{2-14}
                                                                          & Rec@5   & 0.0087 & 0.0096 & 0.0093 & 0.0071           & 0.0116                                               & 0.0089       & 0.0101       & 0.0126                                              & 0.0184                                                & 46.26\%      & \textbf{0.0186*}                                        & 60.37\% \\ \cline{2-14}
                                                                          & NDCG    & 0.1522 & 0.1541 & 0.1517 & 0.1444           & 0.1593                                               & 0.1542       & 0.1577       & 0.1600                                              & 0.1656                                                & 3.45\%       & \textbf{0.1693*}                                        & 6.24\%  \\ \cline{2-14}
                                                                          & MRR     & 0.0341 & 0.0294 & 0.0288 & 0.0295           & 0.0426                                               & 0.0331       & 0.0287       & 0.0422                                              & 0.0506*                                                & 18.84\%      & \textbf{0.0505}                                        & 18.68\% \\ \hline
\multirow{4}{*}{\begin{tabular}[c]{@{}c@{}}Movie-\\ lens-1M\end{tabular}} & Pre@5   & 0.3841 & 0.3613 & 0.3871 & \textbf{0.4054*} & 0.4008                                               & $\backslash$ & $\backslash$ & $\backslash$                                        & $\backslash$                                          & $\backslash$ & 0.3929                                                  & -3.08\% \\ \cline{2-14}
                                                                          & Rec@5   & 0.0924 & 0.0798 & 0.0936 & \textbf{0.0949*} & 0.0948                                               & $\backslash$ & $\backslash$ & $\backslash$                                        & $\backslash$                                          & $\backslash$ & 0.0943                                                  & -0.63\% \\ \cline{2-14}
                                                                          & NDCG    & 0.5971 & 0.5814 & 0.5963 & 0.5911 & \textbf{0.5976*}                                               & $\backslash$ & $\backslash$ & $\backslash$                                        & $\backslash$                                          & $\backslash$ & 0.5920                                                  & -0.94\% \\ \cline{2-14}
                                                                          & MRR     & 0.6145 & 0.5779 & 0.6150 & 0.6314 & \textbf{0.6323*}                                               & $\backslash$ & $\backslash$ & $\backslash$                                        & $\backslash$                                          & $\backslash$ & 0.6169                                                  & -2.44\% \\ \hline
\multirow{4}{*}{\begin{tabular}[c]{@{}c@{}}Book-\\ Crossing\end{tabular}} & Pre@5   & 0.0145 & 0.0109 & 0.0127 & 0.0091           & 0.0136                                               & $\backslash$ & $\backslash$ & $\backslash$                                        & $\backslash$                                          & $\backslash$ & \textbf{0.0184*}                                        & 26.40\% \\ \cline{2-14}
                                                                          & Rec@5   & 0.0096 & 0.0096 & 0.0108 & 0.0058           & 0.0129                                               & $\backslash$ & $\backslash$ & $\backslash$                                        & $\backslash$                                          & $\backslash$ & \textbf{0.0177*}                                        & 37.30\% \\ \cline{2-14}
                                                                          & NDCG    & 0.1683 & 0.1712 & 0.1705 & 0.1506           & 0.1790                                               & $\backslash$ & $\backslash$ & $\backslash$                                        & $\backslash$                                          & $\backslash$ & \textbf{0.1829*}                                        & 2.15\%  \\ \cline{2-14}
                                                                          & MRR     & 0.0433 & 0.0367 & 0.0400 & 0.0308           & 0.0455                                               & $\backslash$ & $\backslash$ & $\backslash$                                        & $\backslash$                                          & $\backslash$ & \textbf{0.0571*}                                        & 25.36\% \\ \hline
\end{tabular}
 \vspace{-0.3cm}
\end{table*}

\subsection{Performance comparison (Q1)(Q2)}
 Table \ref{tb:re} presents the performance of the compared methods in terms of four evaluation metrics. The mark '*' denotes the winner in that row, while the boldface font denotes the winner among the non-social recommendation methods (WMF(ALS), BPR, EXMF, FAWMF, LightGCN, SamWalker++). Overall, except the results in the dataset Movielens, SamWalker or SamWalker++ outperform all compared baselines on all datasets for all metrics. For the sake of clarity, the columns 'impv1' and 'impv2' also show the relative improvement achieved by SamWalker over the all baselines and SamWalker++ over the non-social baselines respectively. The improvements are quite impressive.

 \textbf{Effect of modeling user's exposure.} In the real world, users usually have personalized social contexts and thus are exposed to diverse information. The exposure-based methods, which is capable of adaptively learning fine-grained data confidence weights, usually achieve better performance than the methods with manually assigned confidence weights. It can be seen from the experimental results that the best results are always achieved by the exposure-based methods.

 \textbf{Comparing with exposure-based methods.} Generally, our proposed SamWalker and SamWalker++ achieve better performance than existing exposure-based methods. The superiority can be attributed to two reasons: (1) The vanilla exposure-based method will easily suffer from over-fitting problem which deteriorates the recommendation accuracy, while our methods can mitigates the over-fitting effect by leveraging (pseudo-)social network. (2) Our methods employ an informative sampler which has low sampling variance, making the performance of stochastic learning is comparable with the full-batch learning. We remark that adopting full-batch learning is non-optimal, it either suffers from low efficiency (e.g., EXMF, SERec-Bo and SoEXBMF) or sacrifices models' flexibility with memorization mechanism (e.g., FAWMF). Although FAWMF achieves good performance in the dense dataset Movielens, it performs quite poorly in the sparse datasets such as BookCrossing and Epinions, even worse than the simple baselines.

 \textbf{Comparing in terms of datasets.} We can find that SamWalker or (SamWalker++) outperforms all compared methods in all datasets except Movielens. This interesting phenomenon is caused by the sparsity of user-item interactions and can be explained as follows: (1)Movielens is a quite dense dataset and contains sufficient user-item interactions to supervise the learning of user exposure. As a result, directly optimizing loss function on such rich data would yield pretty good performance, while further leveraging pseudo-social network into the learning does not make much progress. (2) SamWalker++ uses the stochastic learning strategy, which may sacrifice a certain accuracy comparing with full-batch-based learning strategy, making FAWMF even perform slightly better than SamWalker++ in the dataset Movielens.

\textbf{SamWalker++ Vs. SamWalker.} Although SamWalker++ does not use social information, SamWalker++ still achieves comparable performance with SamWalker. Especially in the dataset Epinions, where the social information is not as abundant as LastFM, SamWalker++ even outperforms SamWalker. These results validate the effectiveness of the constructed pseudo-social network, which encodes rich similarity information between users as the real social network.

\begin{figure}[t]
  \centering
  \includegraphics[width=0.49\textwidth]{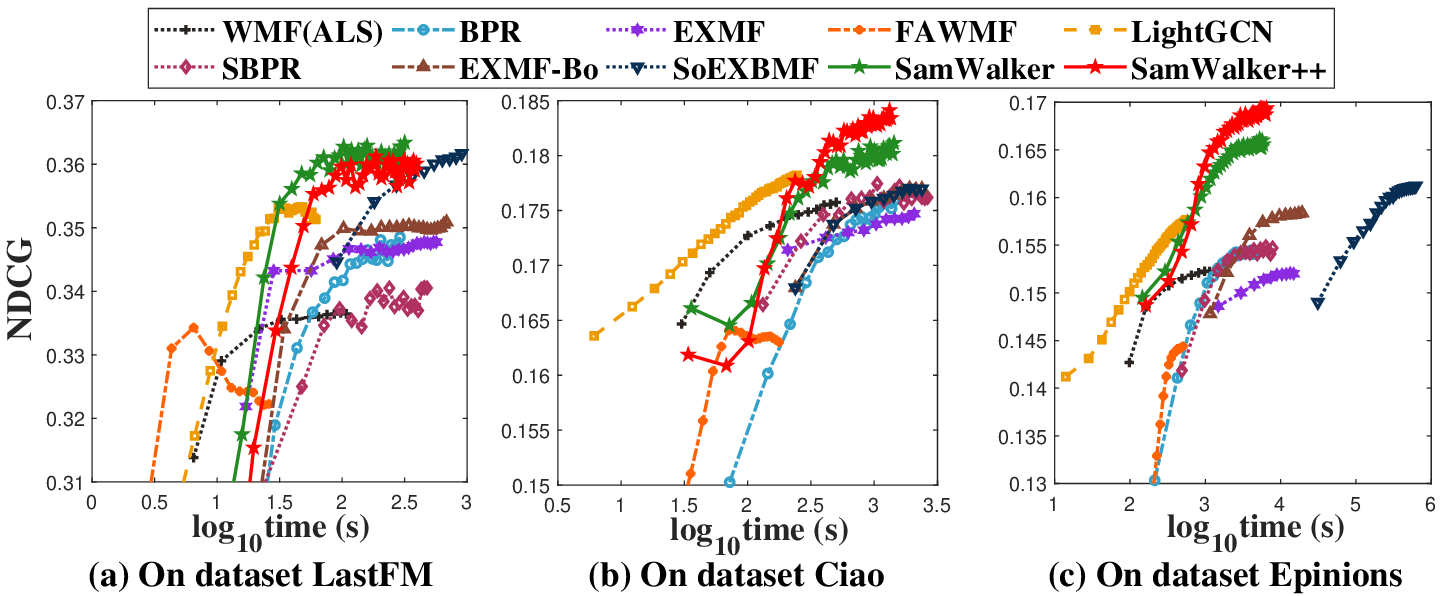}\\
  \caption{NDCG for each method in different steps versus time.}\label{fg:ti}
   \vspace{-0.3cm}
\end{figure}

\begin{table*}[t!]
\centering
\scriptsize
\caption{Average empirical variance of the estimated gradients using different sampling strategies. Here $X^{(1)}$, $X^{(0)}$ denote the number of ones or zeros in the matrix $X$. Similarly, $r^{(1)}_i$, $r^{(0)}_i$ denote the number of ones or zeros in the $i$-th row of matrix $X$ and $c^{(1)}_j$, $c^{(0)}_j$ denote the number of ones or zeros in the $j$-th column of matrix $X$.}
\label{tb:var}
\begin{tabular}{|c|c|c|c|c|c|c|c|}
\hline
\multirow{3}{*}{\begin{tabular}[c]{@{}c@{}}Sampling\\ strategy\end{tabular}} & \multirow{3}{*}{Distribution}                                                                                                                                                         & \multicolumn{6}{c|}{Average Variance}                                     \\ \cline{3-8}
                                                                             &                                                                                                                                                                                       & \multicolumn{3}{c|}{For SamWalker} & \multicolumn{3}{c|}{For SamWalker++} \\ \cline{3-8}
                                                                             &                                                                                                                                                                                       & 50 It.    & 100 It.    & 500 It.   & 50 It.     & 100 It.    & 500 It.    \\ \hline
S-allunion                                                                   & ${p_{ui}} = 1/(n \times m)$                                                                                                                                                           & 2.3793    & 2.5640     & 2.6663    & 2.6411     & 2.8059     & 2.8003     \\ \hline
S-balunion                                                                   & ${p_{ui}} = 1/(2|{X^{({x_{ui}})}}|)$                                                                                                                                                  & 0.4154    & 0.4941     & 0.5368    & 0.3691     & 0.4381     & 0.4639     \\ \hline
S-itempop                                                                    & ${p_{ui}} = I[{x_{ui}} = 1]/(2|{X^{(1)}}|) + I[{x_{ui}} = 0]c_i^{(1)}/(2\sum\limits_{1 \le b \le m} {c_b^{(1)}} )$                                                                    & 0.1720    & 0.1938     & 0.2190    & 0.1382     & 0.1537     & 0.1583     \\ \hline
S-cobias                                                                     & ${p_{ui}} = r_u^{(1 - {x_{ui}})}c_i^{(1 - {x_{ui}})}/(2\sum\limits_{1 \le a \le n} {\sum\limits_{1 \le b \le m} {I[{x_{ab}} = {x_{ui}}]r_a^{(1 - {x_{ui}})}c_b^{(1 - {x_{ui}})}} } )$ & 0.1617    & 0.1874     & 0.2031    & 0.1562     & 0.1762     & 0.1859     \\ \hline
Random Walk                                                                  & $p_{ui}\propto \gamma_{ui} $                                                                                                                                                          & 0.1358    & 0.1464     & 0.1510    & 0.1296     & 0.1395     & 0.1394     \\ \hline
\end{tabular}
\vspace{-0.2cm}
\end{table*}

\textbf{Runtime vs. NDCG.} Figure \ref{fg:ti} depicts running time vs. NDCG of the compared methods. Generally, SamWalker or SamWalker++ achieve best performance. The powerful competitor is LightGCN. Although LightGCN has better NDCG than SamWalker++ at the beginning, SamWalker++ overtakes LightGCN soon with few iterations and achieves much better performance than LightGCN finally. Also, we observe that these exposure-based methods (EXMF,SERec,SoEXBMF) achieve good performance but are computational expensive. FAWMF is quite efficient, but its performance is quite poor in these sparse datasets.

\subsection{Sampler comparisons (Q3)}
In this subsection, we conduct two types of sampler comparisons: (1) To empirically validate the correctness of the Lemmas 1 and the effectiveness of our random-walker-based sampler, we first test SamWalker (or SamWalker++) using different sampling strategies. Note that the sampling strategies determine the frequency of the data used for updating the model and may skew the data contribution. For fair comparison, we offset the sampling bias and make the compared methods yield unbiased gradients. (2) We conduct a direct comparison of SamWalker (SamWalker++) with existing samplers, where we do not offset sampling bias and the data contribution is affected by the sampling strategy.

\textbf{SamWalker (or SamWalker++) using different sampling strategies.} We compare our random-walk-based sampler with other sampling strategies including: (1) S-allunion, the global uniform sampling strategy; (2) S-balunion \cite{Pan2008,hernandez2014stochastic}, which samples un-observed data (zeros) and observed data (ones) with equal probability to deal with unbalance data problem; (3) S-itempop \cite{yu2017selection}, which samples instances based on item popularity; (4) S-cobias \cite{hernandez2014stochastic}, whose sampling distribution is proportional to user/item popularity. The detailed distributions of these sampling strategies are presented in Table \ref{tb:var}. To offset the bias introduced by the sampling strategies, we will weight the data with the inverse of the sampling probability. We remark that here we do not consider some sophisticated samplers \cite{wang2017irgan,ding2019reinforced} for comparison, as their sampling distribution is hard to estimate. Also, their debiased gradients are usually instable and potentially exploded.

We first empirically compare the variance of the estimated gradients of our SamWalker or SamWalker++ using different sampling strategies. To do this, we train our models for 50, 100 or 500 epochs on the dataset LastFM. After training, we generate mini-batch with various sampling strategies and calculate un-biased estimated gradients of our objective w.r.t $\theta$. We repeat this precedure for 1000 times and calculate the variance of the estimated gradients for different sampling strategies. The final results presented in Table \ref{tb:var} are averaged over $\theta$. Our random walk-based sampling strategy achieves the lowest average variance among various samplers for all conditions. This result is coincident with Lemma 1. Also, we observe the following interesting phenomenon: With more training epoches, the variance will become larger, not smaller as usual. It may be explained as follow:  SamWalker and SamWalker++ initialize with relative similar edge weights. With training proceeding, driven by the data, the edge weights and data confidence exhibit more and more heterogeneity. Thus, the variance of the estimated gradients will become larger.

We also presents the NDCG of SamWalker on the dataset LastFM with different sampling strategies versus the number of iterations and running time in Figure \ref{fg:sam}. As we can see, our random walk-based sampler performs better than others in all convergence, speed and accuracy. One reason is that our sampler has low variance (Lemma 1). Another reason is that in our sampler the confidence weights $\gamma_{ui}$ are absorbed into the sampling distribution and does not need calculating, which saves much time (Lemma 2).

\begin{figure}
  \centering
  \includegraphics[width=0.46\textwidth]{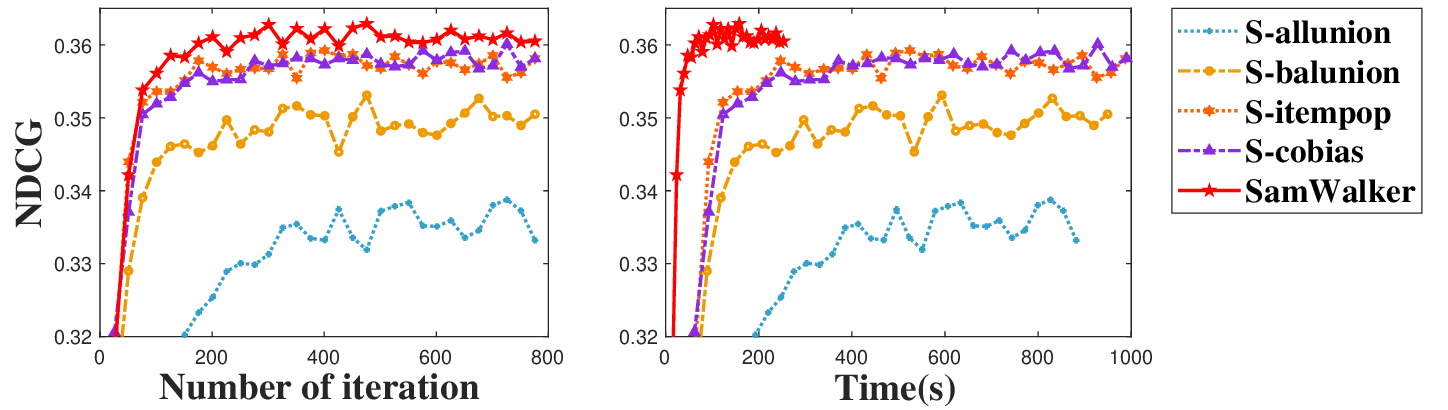}\\
  \caption{NDCG for different sampling strategy versus the number of iterations and running time.}\label{fg:sam}
   \vspace{-0.3cm}
\end{figure}

\begin{table}[t!]
\centering
\scriptsize
\caption{Comparison of SamWalker and SamWalker++ with state-of-the-arts.}
\label{tb:com}
\begin{tabular}{|c|c|c|c|c|c|c|}
\hline
\multirow{2}{*}{Methods} & \multicolumn{2}{c|}{LastFM} & \multicolumn{2}{c|}{Ciao} & \multicolumn{2}{c|}{Epinions} \\ \cline{2-7}
                         & Pre@5        & NDCG         & Pre@5       & NDCG        & Pre@5         & NDCG          \\ \hline
Adaptive                     & 0.0986       & 0.3448       & 0.0146      & 0.1784      & 0.0088        & 0.1559        \\ \hline
Adversarial                 & 0.1002       & 0.3503       & 0.0153      & 0.1786      & 0.0094        & 0.1567        \\ \hline
SamWalker                & 0.1177       & 0.3634       & 0.0182      & 0.1811      & 0.0149        & 0.1656        \\ \hline
SamWalker++              & 0.1099       & 0.3601       & 0.0184      & 0.1834      & 0.0165        & 0.1693        \\ \hline
\end{tabular}
\vspace{-0.3cm}
\end{table}

\textbf{Comparing with state-of-the-art samplers.} We also conduct a direct comparison of our SamWalker (SamWalker++) with existing samplers where the sampling bias has not been removed. We choose two state-of-the-art sampling strategies for comparison: (1) the adaptive hard sampler, which over-samples the ``difficult'' negative instances that yield large gradients \cite{rendle2014improving}; (2) the adversarial sampler, which employs adversarial learning to find the instances that are difficult to be differentiated by the recommendation model. Since the implementations of \cite{ding2019reinforced,wang2020reinforced} require other side information, we follow \cite{wang2017irgan,park2019adversarial} and implement the sampler model with the matrix factorisation. From Table \ref{tb:com}, we can find our SamWalker and SamWalker++ consistently outperform the two strong baselines. This result can be attributed to the fact that the ``difficult'' instances do not suggest the instances are informative enough deserving over-training. These instances may be just caused by the non-exposure and contain limited information of user preference. Over-sampling such uninformative instances would deteriorate model performance.

\subsection{Ablation study (Q4)}
We remove different components at a time and compare SamWalker++ with its two special cases: (1) SamWalker++noc, the special case of SamWalker++ which leaves out community nodes; (2) SamWalker++noi: the special case without item nodes. The performance is presented in Figure \ref{fg:youvswu}. We observe that SamWalker++ consistently outperforms its two special cases. This result validates the effectiveness of introducing both item nodes and community nodes to transfer the knowledge.
\begin{figure*}[t]
  \centering
  \subfigure[On  LastFM]{
\includegraphics[width=0.16\textwidth]{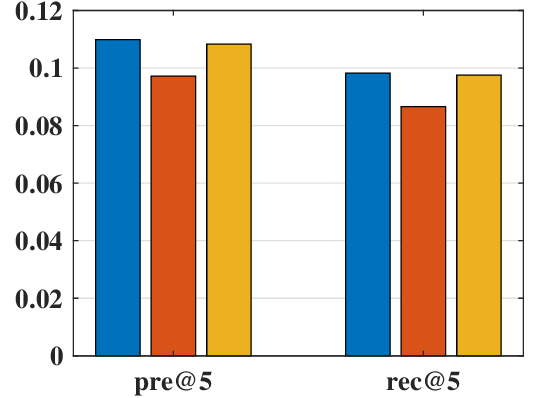}
}
\subfigure[On Ciao]{
\includegraphics[width=0.16\textwidth]{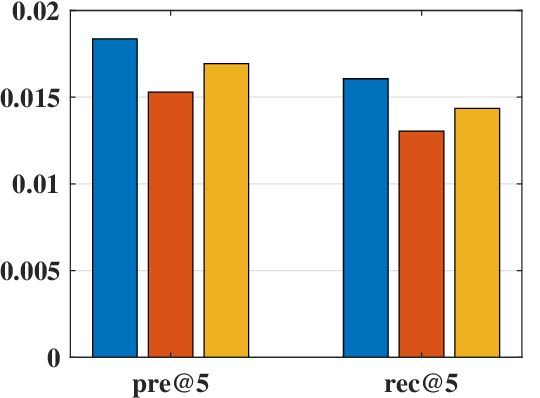}
}
\subfigure[On  Epinions]{
\includegraphics[width=0.16\textwidth]{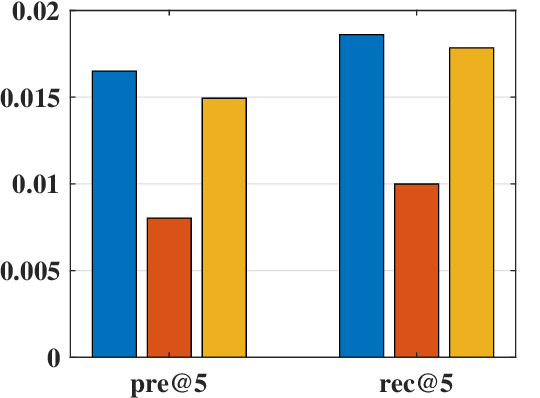}
}
\subfigure[On Moivelens]{
\includegraphics[width=0.16\textwidth]{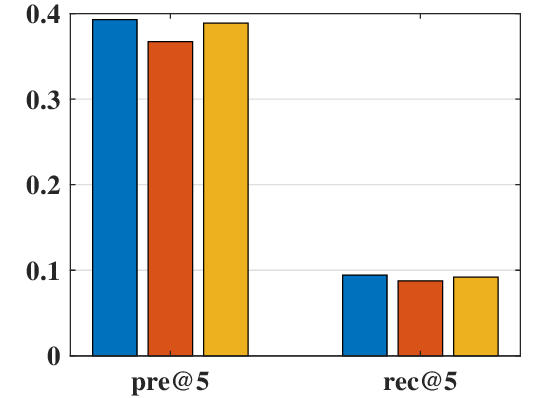}
}
\subfigure[On BookCrossing]{
\includegraphics[width=0.16\textwidth]{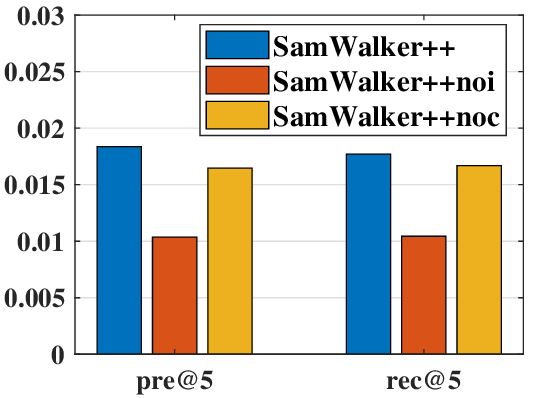}
}
  \caption{Performance comparison of SamWalker++ with its two special cases in terms of Pre@5 and Rec@5. }\label{fg:youvswu}
  \vspace{-0.3cm}
\end{figure*}

\begin{figure*}[t]
  \centering
\subfigure[On  LastFM]{
\includegraphics[width=0.16\textwidth]{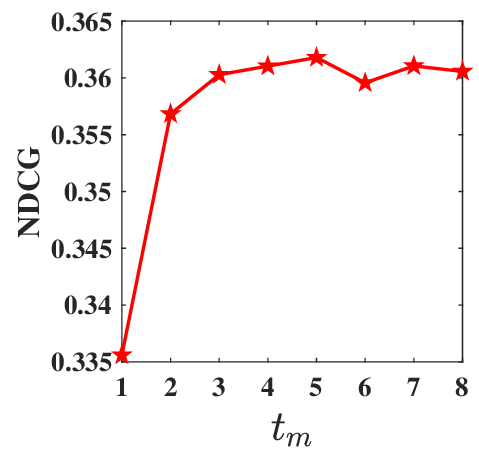}
}
\subfigure[On Ciao]{
\includegraphics[width=0.16\textwidth]{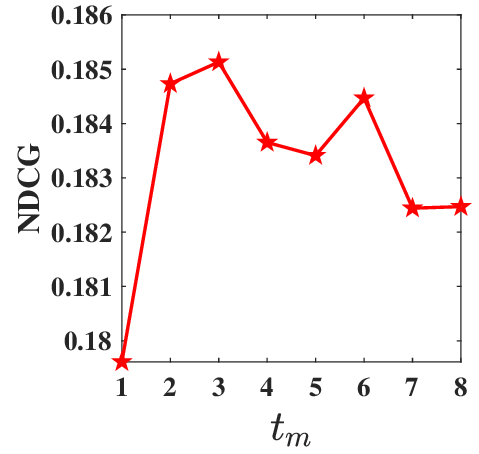}
}
\subfigure[On Epinions]{
\includegraphics[width=0.16\textwidth]{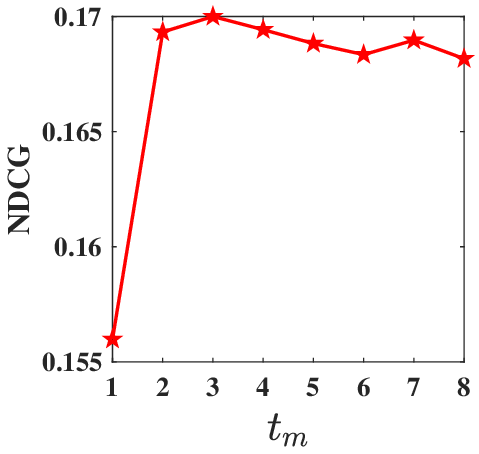}
}
\subfigure[On Moivelens]{
\includegraphics[width=0.16\textwidth]{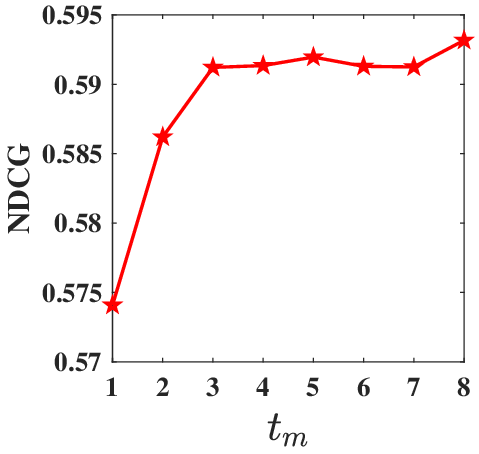}
}
\subfigure[On BookCrossing]{
\includegraphics[width=0.16\textwidth]{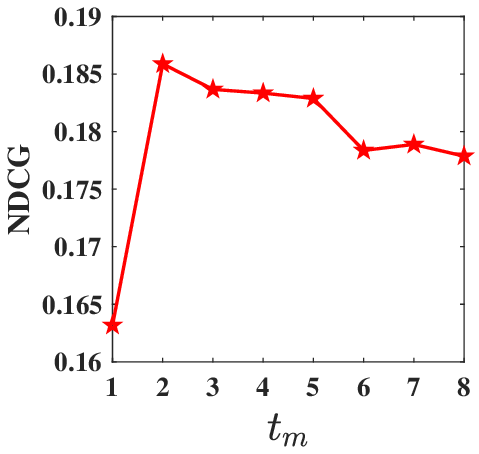}
}
\caption{Performance comparison of SamWalker++ with varying $t_m$. }\label{fg:ndvsdep}
\vspace{-0.3cm}
\end{figure*}

\subsection{Effect of parameter $t_m$ (Q5)}

Figure \ref{fg:ndvsdep} shows how parameter $t_m$ affects the performance of SamWalker++, where $t_m$ indicates the max depth of the random walk.  With $t_m$ increasing, with few exception, the performance will become better at the beginning. This result validates the effectiveness of leveraging network to transfer the knowledge. But when $t_m$ surpasses a threshold, the performance becomes unaffected or even experiences some degradation with further increase of $t_m$. Too deep random walk will bring more irrelevant factors and even data noises, which deteriorates recommendation accuracy.

\section{Conclusion}

Data confidence assignment and efficient model learning are two key problems in the implicit recommendation task. In this paper, we present two novel recommendation methods SamWalker and SamWalker++ to address these problems. SamWalker models the data confidence with a social context-aware function, which can reduce the number of learned parameters and adaptively specify personalized confidence weights for implicit feedback data. As the real social network data is not easily obtained, we instead construct pseudo-social network where similar users are connected with specific item nodes or community nodes. SamWalker++ models data confidence on the pseudo-social network, so that the inference of a user's exposure can benefit from other similar users. We further propose a random walker-based sampling strategy to draw informative training instances to speed up inference and reduce sampling variance. Extensive experimental results on five real-world datasets demonstrate the superiority of SamWalker and SamWalker++ over existing methods.

One interesting direction for future work is to leverage sophisticated graph neural network \cite{dong2020equivalence} in the exposure model, which is capable of capturing more complex affinity between users and items along the interaction graph. Also, it will be interesting to explore dynamic exposure-based recommendation, as in the real world, users' preference, exposure and relations may evolve over time. Note that knowledge graph captures much more rich information, which could be useful to capture user exposure. It is promising that exploiting the knowledge graph to improve the efficacy of SamWalker++, especially for the cold users or items that have limited interactions.

\appendices

\ifCLASSOPTIONcompsoc
  \section*{Acknowledgments}
\else
  \section*{Acknowledgment}
\fi

This work is supported by National Key R\&D Program of China (Grant No:  2019YFB1600700, 2018AAA0101505) and National Natural Science Foundation of China (Grant No: U1866602).

\ifCLASSOPTIONcaptionsoff
  \newpage
\fi

\bibliographystyle{IEEEtran}
\bibliography{sigproc}
%
%
%
%
%
%
%
%




\end{document}